\newtheorem{theorem}{\bf Theorem}[section]
\newtheorem{definition}[theorem]{\textsl{\bf Definition}{}}
\newtheorem{lemma}[theorem]{\bf Lemma}
\newcommand{\eq}[1]{Eq.~(\ref{#1})}
\begin{document}

\date{March 21, 2011}

\author{Peter Bowles}
\thanks{Corresponding author. \\ Electronic address: {pmxpb2@nottingham.ac.uk}}
\author{M\u{a}d\u{a}lin Gu\c{t}\u{a}}
\author{Gerardo Adesso}
\affiliation{{School of Mathematical Sciences, University of Nottingham, University Park, Nottingham NG7 2RD, United Kingdom}}

\title{Asymptotically optimal purification and dilution of mixed qubit and Gaussian states}

\begin{abstract}
Given an ensemble of mixed qubit states, it is possible to increase the purity of the constituent states using a procedure known as state purification. The reverse operation, which we refer to as dilution, produces a larger ensemble, while reducing the purity level of the systems. In this paper we find asymptotically optimal  procedures for purification and dilution of an ensemble of independently and identically distributed mixed qubit states, for some given input and output purities and an asymptotic output rate. Our solution involves using the statistical tool of local asymptotic normality, which recasts the qubit problem in terms of attenuation and amplification of a single-mode displaced Gaussian state. Therefore, to obtain the qubit solutions, we must first solve the analogous problems in the Gaussian setup. We provide full solutions to all of the above, for the (global) trace-norm figure of merit.
\end{abstract}

\pacs{03.67.Hk, 03.65.Wj, 02.50.Tt, 42.50.Dv}

\maketitle

\section{Introduction}

When implementing any quantum information protocol, the states we wish to employ and manipulate are inevitably affected by decoherence effects, which diminish their purity and consequently their resource power. There exist several
well-established methods to protect against such undesirable factors:  strengthening the entanglement resource using distillation methods \cite{BBP} or employing a quantum error correction scheme \cite{S} to encode our `fragile' states into some larger, more unyielding system. The method we study in this paper is that of state {\it purification} \cite{CEM,KW}, a procedure which takes as input an ensemble of identical copies of an arbitrary (unknown) state and produces as output a smaller ensemble of identical states with higher purity. This can be seen as a special case of the more general problem of inverting the effect of a noisy channel on ensembles of states, the channel being the depolarising one in the present study.

There already exists several theoretical results for purification of $n$ i.i.d. (independently and identically distributed) mixed qubits, notably Refs.~\cite{CEM, KW}, where optimal purification algorithms for various formulations of the purification problem are provided. Purification of an ensemble of mixed qubit states has also been found to occur in the context of `superbroadcasting' \cite{DMP}, an $n \rightarrow m$ cloning procedure which can actually result in purified clones for $n\geq4$ and sufficiently mixed input states (the noise present is merely shifted from local states into correlations between output states). For $n\geq m$, superbroadcasting is actually equivalent to the optimal purification procedure of \cite{CEM}.
Experimentally, purification has been achieved in \cite{RDC}, which implemented the methodology of \cite{CEM} and demonstrated optimal purification for the case $n=2$.

Beyond the {\it entanglementology} (phenomenology of entanglement), judging the performance of a purification protocol requires a figure of merit (FoM) which measures the departure from the ideal transformation. Two types of FoM have been considered in the literature, with very different results. The {\it local} FoM is built upon the comparison of the {\it reduced} states of individual output systems with the target state.  In this case, a complete reversal of the depolarising channel may be obtained asymptotically with the size of the input ensemble, and with arbitrarily high output rate $m/n$ \cite{KW}. The {\it global} FoM compares the {\it joint} state of the output with that of a product of independent target states. This is a more demanding criterion. For example if the output systems are independent and identically prepared then the global fidelity scales as $F(n,m)= F_n^m$ where $F_n<1$ is the fidelity of an individual output state with respect to the target state. Indeed, it has been shown \cite{KW} that no protocol can achieve asymptotic purification $\big(F(n,m) \to 1 \big)$ to {\it pure} target states at a finite rate $m/n$. The global figure or merit is relevant whenever we deal with the collective state of the output rather than the individual constituents, as in the case of state transfer between atomic ensembles and light. Additionally, it can serve as a "measure of correlations" when the individual constituents of the output states are known to be exactly in the target state, as in superbroadcasting. This hypothesis will however not be pursued in this paper.

The above no-go theorem motivates us to consider the question whether the depolarising channel can be reversed with a positive asymptotic output rate, when the target states (i.e. the states prior to applying the depolarising channel) are {\it mixed}. We show that this is indeed possible, and compute the maximal purification rate for given input and target purity, and the optimal FoM for approximate purification at a fixed rate which is higher than the maximal one.

 We also consider the opposite  process of {\it dilution} in which, starting from an ensemble of $n$ identically prepared states, we produce a {\it larger} ensemble consisting of $m$ independent,  but more mixed states.
%$\rho^{\otimes n}_{\textbf{r}}$, redistributes the constituent states producing more copies $m>n$ of more mixed states, thus reducing $\textbf{r}$ in the process.
Dilution shares similarities with the process of optimal $n \rightarrow m$ quantum cloning \cite{OQC}, but while in cloning the rate $m/n$ is fixed, and one aims at generating clones as close as possible to the input states (with respect to a local or a global FoM), in a dilution procedure we set a target level of output purity and look for the optimal rate for generating such target states.
\begin{table*}[tb]
\begin{tabular}{c||c||c}
  \hline \hline
  % after \\: \hline or \cline{col1-col2} \cline{col3-col4} ...
    & qubit problem & Gaussian problem \\ \hline \hline
    state model &
    $\begin{array}{c}\mbox{ensemble of $n$ i.i.d mixed qubits } \\  \rho_{\textbf{r}}^{\otimes n} \mbox{ [Eq.~(\ref{QubIN})]} \\  \mbox{$n \gg 1$:  number of copies} \\ \mbox{$\textbf{r}$ with $\|\textbf{r}\| \le 1$: Bloch vector;}  \end{array}$
    & $\begin{array}{c}\mbox{single-mode displaced Gaussian state} \\  \Phi_\alpha^s \mbox{ [Eq.~(\ref{GaussIN})]} \\ \mbox{$\alpha \in \mathbb{C}$: displacement;} \\ \mbox{$s \in (0,1)$:  purity parameter} \end{array}$ \\
  \hline
  input & $\rho_{\textbf{r}_0+\textbf{u}/\sqrt{n}}^{\otimes n}$& $\Phi_\alpha^{s_1}$ \\
  \hline
  target & $\rho_{\lambda \textbf{r}_0+k\textbf{u}/\sqrt{m}}^{\otimes m}$ & $\Phi_{k \alpha}^{s_2}$ \\
  \hline
  procedure & $\begin{array}{c}\mbox{purification} \\ \lambda>1, \ m<n\end{array}$ & $\begin{array}{c}\mbox{attenuation} \\ s_2 < s_1, \ k<1\end{array}$ \\
  \hline
  procedure & $\begin{array}{c}\mbox{dilution} \\ \lambda < 1, \ m>n \end{array}$ & $\begin{array}{c}\mbox{amplification} \\ s_2 > s_1, \ k>1\end{array}$ \\
    \hline
%Optimal `rate' & $ \Lambda (\|{\bf r}\|, \lambda)= \lim_{n\to\infty} \frac{m}{n}$ &
%$k_{0}  $ \\
   \hline
\end{tabular}
\caption{Summary of the notation adopted in the present paper. For the qubit problem, we aim at optimising the output-vs-input rate $m/n$ (maximising it for purification, and minimising it for dilution) at given input Bloch vector $\textbf{r}_0+ \textbf{u}/\sqrt{n}$ and scale factor $\lambda$. For the corresponding Gaussian problem, we aim at finding the maximal value of the displacement ratio $k$, such that attenuation or amplification can be realised perfectly, for given target temperature parameters $s_1$ and $s_2$ and unknown displacement $\alpha$. The framework of local asymptotic normality provides a rigorous link between the two problems, as for $n \gg 1$ the local Bloch vector $\textbf{u}$ is mapped into the displacement $\alpha$ of a single-mode coherent thermal state.}\label{table}
\end{table*}

 In deriving the asymptotic results, the key mathematical tool is that of local asymptotic normality (LAN), a fundamental `classical' statistics technique \cite{LC} which was recently extended to the context of quantum statistical models \cite{GK,GJK,GK2,GJ}.  In the quantum case, LAN dictates that the collective state of $n$ i.i.d. quantum systems, can be approximated by a joint Gaussian state of a classical and a quantum continuous variable (CV) systems. This has been used to derive asymptotically optimal state estimation strategies for mixed states of arbitrary finite dimension \cite{GK2}, and also in finding quantum teleportation benchmarks \cite{GBA} and optimal quantum learning procedures \cite{GKot} for multiple qubit states. The general strategy is to recast statistical problems
involving $n$ i.i.d. quantum systems into the simpler setting of Gaussian states. The optimal solution for the corresponding Gaussian problems can then be used to construct asymptotically optimal procedures for the original one. In section \ref{sec3} we sketch how this could be physically implemented, and more details can be found in \cite{GJK}.

Following this methodology, we transform the qubit purification and dilation problems into those of optimal {\it attenuation} and {\it amplification} for  a one-mode CV system in a Gaussian state, together with a classical real-valued Gaussian variable, both with known variance but unknown means. In attenuation we reduce the variance of a displaced Gaussian state, at the price of simultaneously reducing its amplitude, while in amplification we increase the amplitude, as well as the variance. For both problems we use a FoM based on maximum trace-norm distance, and show that the optimal attenuation channel is obtained by applying a beamsplitter, while the optimal amplification is implemented by a non-degenerate parametric amplifier. A similar scheme for the  attenuation of Gaussian CV states has been proposed and experimentally implemented in \cite{AFF}. Parametric amplification has been investigated in \cite{HM,CC,CDG}, and  demonstrated experimentally in \cite{OUP}. In particular, the same amplifier is optimal for a FoM based on the minimum amount of added noise \cite{HM,CC}. However, whilst these transformations are well known candidates for our protocols, to the best of our knowledge a proof of their optimality with respect to the FoM chosen in this paper  had not been obtained in the literature. Our proof relies on a covariant channels optimisation technique developed in \cite{GM,GBA}.  We find that for given input and output purity parameters, there exists a range of values for the ratio $k$ between output and input displacement, such that attenuation or amplification can be realised perfectly, and we compute the maximal (optimal) value $k_0$, as a function the two purities.  In the parameter range where the procedures cannot be accomplished perfectly, we give the exact expression for the optimal FoM.

A schematic summary of the problems addressed in this paper is provided in Table \ref{table}. The paper is organised as follows. In Section~ \ref{sec2} we formulate and solve the two quantum Gaussian problems, and the corresponding classical one. In Section~\ref{sec3} we use this result in conjunction with LAN to find asymptotically optimal purification and amplification channels for states of $n$  i.i.d. mixed qubits.
We draw our concluding remarks in Section~\ref{sec4}.
The proofs are collected in Appendix A.

\section{Optimal attenuation and amplification of Gaussian states}
\label{sec2}
\subsection{Classical Case}
\label{sec2a}
Before we move onto the quantum case, it is instructive and relevant to consider the corresponding problems for classical random variables. In the classical scenario, the analogue of `attenuation' (`amplification') is a procedure which reduces (increases) the mean and variance of a given random variable. The analogue to our quantum problem would then be to find a transformation $K$ which maps a real-valued normally distributed random variable $X\sim N(u,V_1)$ of arbitrary mean $u$ and fixed variance $V_1$,  into a variable $Y\sim N(ku, V_2)$ such that the risk
\begin{equation}
R_{\rm max}(K; V_{1}, V_{2}, k) =2 \sup_u\|K\big(N(u, V_1)\big)-N(ku,V_2)\|_{\text{tv}}
\label{cfom}
\end{equation}
is minimised. Here $k$ represents a fixed constant, where $0<k<1$ means attenuation and $k>1$ means amplification of the Gaussian variable $X$, and we choose the interesting case where $V_1>V_2$ in the case of attenuation, and  $V_1<V_2$ for amplification. The notation $\|\mathbb{P}-\mathbb{Q}\|_{\text{tv}} = \sup\{|\mathbb{P}(A)-\mathbb{Q}(A)|:A\in\mathcal{F}\}$, for the $\sigma$-algebra $\mathcal{F}$, represents the total variation distance between the probability distributions
$\mathbb{P}$ and $\mathbb{Q}$ which reduces to one-half of the $L_{1}$-distance between their probability densities in the case of mutually absolutely continuous distributions \cite{Torgersen}.

The solutions of both classical and quantum versions of this problem rely on the notion of `covariance'. Consider the transformation
\begin{equation}
X\mapsto K(X)=kX+Z
\label{ccov}
\end{equation}			
where $X$ and $Z$ are independent random variables, $Z$ having a fixed variance and vanishing mean. Such a (classical) channel is covariant,
in the sense that
\begin{equation}
K(X+C) = K(X) +kC
\end{equation}
for any constant $C$. Such transformations can be shown to not only minimise (\ref{cfom}), but also to render it independent of expectation so that the FoM becomes
\begin{equation*}
{R}_{\rm max}(K; V_{1},V_{2},k)=2 \|K\big(N(0, V_1)\big)-N(0,V_2)\|_{\text{tv}}.
%&=\int{\bigg|\frac{1}{\sqrt{2\pi k^2 V_1}}e^{-\frac{x^2}{2k^2V_1}}-\frac{1}{\sqrt{2\pi V_2}}e^{-\frac{x^2}{2V_2}}\bigg|dx}
\end{equation*}
%as a measure of the success of $K$.
It is easy to see that if
\begin{equation}
k\leq  k^{(c)}_{0} (V_{1}, V_{2}):= \sqrt{\frac{V_2}{V_1}}
\label{cr}
\end{equation}
then the target distribution can be achieved exactly, with the appropriate amount of Gaussian noise in the variable $Z$. As we shall see in the next section, there exists an analogous range (\ref{koeqn}) for the quantum Gaussian transformation.

As for the case $k> k_{0}^{(c)} (V_{1}, V_{2})$, it can be shown \cite{Torgersen} that the optimal choice for $Z$ in \eqref{ccov} is $Z=0$, as one would expect, so that the optimal figure of merit is
\begin{eqnarray}
&&R_{\rm minmax}(V_{1},V_{2},k):= \inf_{K}R_{\rm max}(K;V_{1},V_{2},k) \nonumber\\
&&=\int{\bigg|\frac{1}{\sqrt{2\pi k^2 V_1}}e^{-\frac{x^2}{2k^2V_1}}-\frac{1}{\sqrt{2\pi V_2}}e^{-\frac{x^2}{2V_2}}\bigg|dx}.
\label{clrisk}
\end{eqnarray}
%In the quantum case we will see that this is impossible, as any dilution or amplification must involve a minimum amount of quantum noise.

Henceforth, we will denote by $K^*$ the optimal transformation for the two cases discussed above.
\begin{figure}[t]
\includegraphics[width=8.5cm]{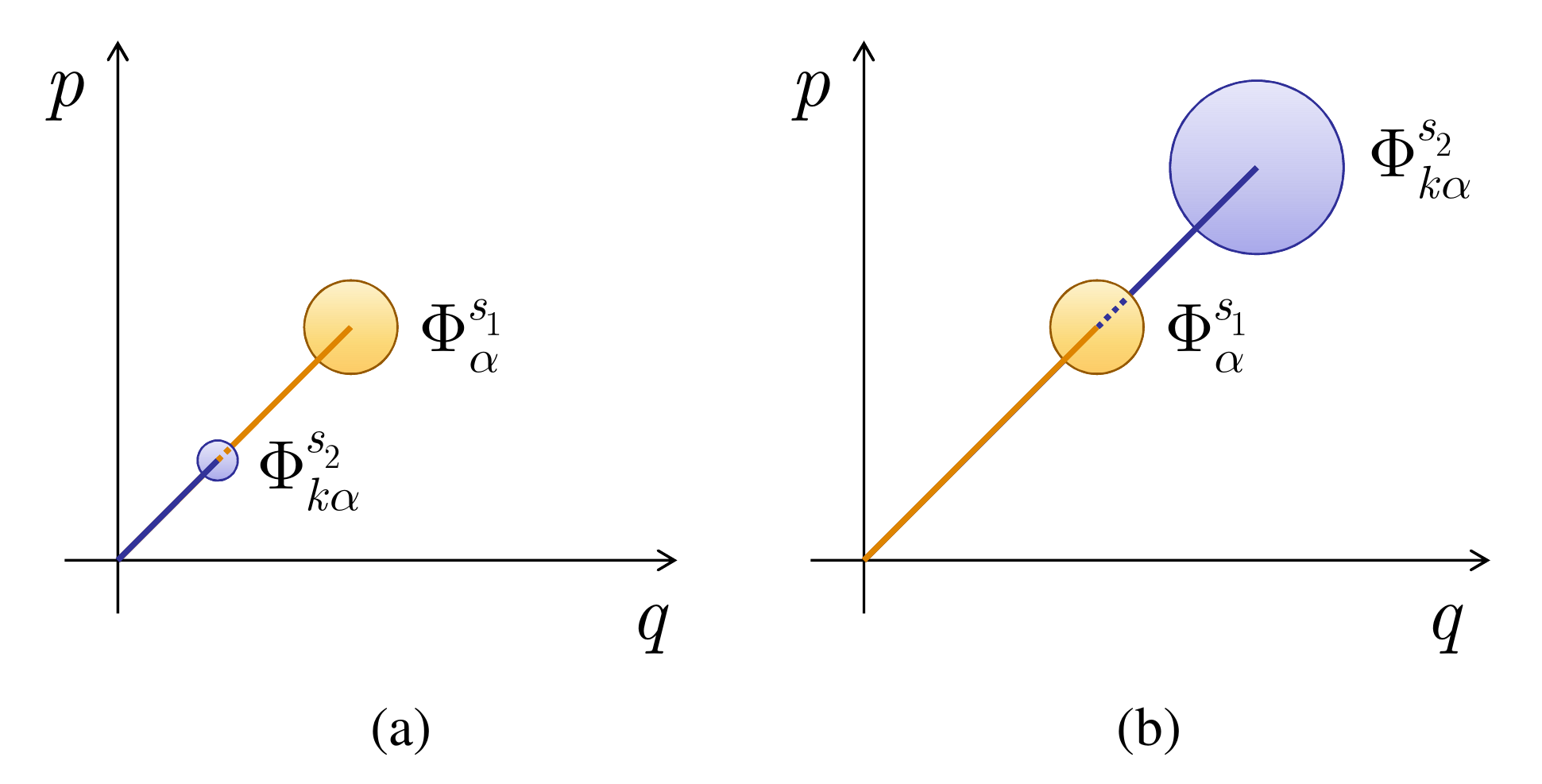}
\caption{(Color online) Schematic phase-space diagram for (a) attenuation and (b) amplification of a displaced Gaussian state $\Phi^{s_1}_\alpha$. }
\label{figps}
\end{figure}

\subsection{Quantum Case}
\label{sec2b}

In this Section we consider the following: given a Gaussian state $\Phi_\alpha$ of a one-mode CV quantum system, with known covariance and unknown displacement $\alpha$, we would like to optimally attenuate (amplify) it, that is transform it into a state with smaller (greater) covariance and displacement $k\alpha$, with the largest possible proportionality constant $k$. Let
$$
W_\alpha := \exp(\alpha a^\dagger-\bar{\alpha}a)
$$
denote the Weyl operators where $\alpha\in\mathbb{C}$ and $a$, $a^\dagger$ the creation and annihilation operators satisfying
$[a,a^\dagger]=\mathbf{1}$ and
$$
a|n\rangle=\sqrt{n}|n-1\rangle, \qquad n\geq 0,
$$ where $\{|n\rangle\}_{n\geq0}$ is the Fock basis of the Hilbert space $\mathcal{H}$. For $0<s<1$ we denote by $\Phi^s$ the centred, phase invariant Gaussian state
\begin{equation}\label{eq.thermal}
\Phi^s = (1-s)\sum^\infty_{n=0} s^n|n\rangle\langle n|,
\end{equation}
and by displacing it we obtain the family of Gaussian states
\begin{equation}
\Phi^s_\alpha := W_\alpha \Phi^s W^\dagger_\alpha.
\label{GaussIN}
\end{equation}
Given two different mixing parameters $s_1>s_2$ $(s_1<s_2)$ and a positive parameter $k<1$ $(k>1)$ we would like to find the optimal attenuation (amplification) channel which maps the state $\Phi^{s_1}_\alpha$ close to the state $\Phi^{s_2}_{\alpha}$ for an arbitrary  displacement $\alpha$ (see Fig.~\ref{figps}). For any channel $P:\mathcal{T}_1(\mathcal{H}) \rightarrow \mathcal{T}_1(\mathcal{H})$ we define the FoM called the maximum risk
\begin{equation}
R_{\rm max}(P; s_1,s_2,k) = \sup_{\alpha\in\mathbb{C}}\|P(\Phi^{s_1}_\alpha)-\Phi^{s_2}_{k\alpha}\|_1
\end{equation}
and the minimax risk
\begin{equation}
R_{\rm minmax}(s_1,s_2,k) = \inf_P R_{\rm max}(P; s_1,s_2,k).
\label{mmx}
\end{equation}
A channel is called `minimax' if its maximum risk is equal to the minimax risk. We will show that (up to a trivial adjustment for a certain range of $k$'s) the optimal solutions to the attenuation and amplification problems are, respectively, the beamsplitter and parametric amplifier.

\begin{figure*}[tbh]
\subfigure[\label{figpur2d}]
{\includegraphics[width=6.2cm]{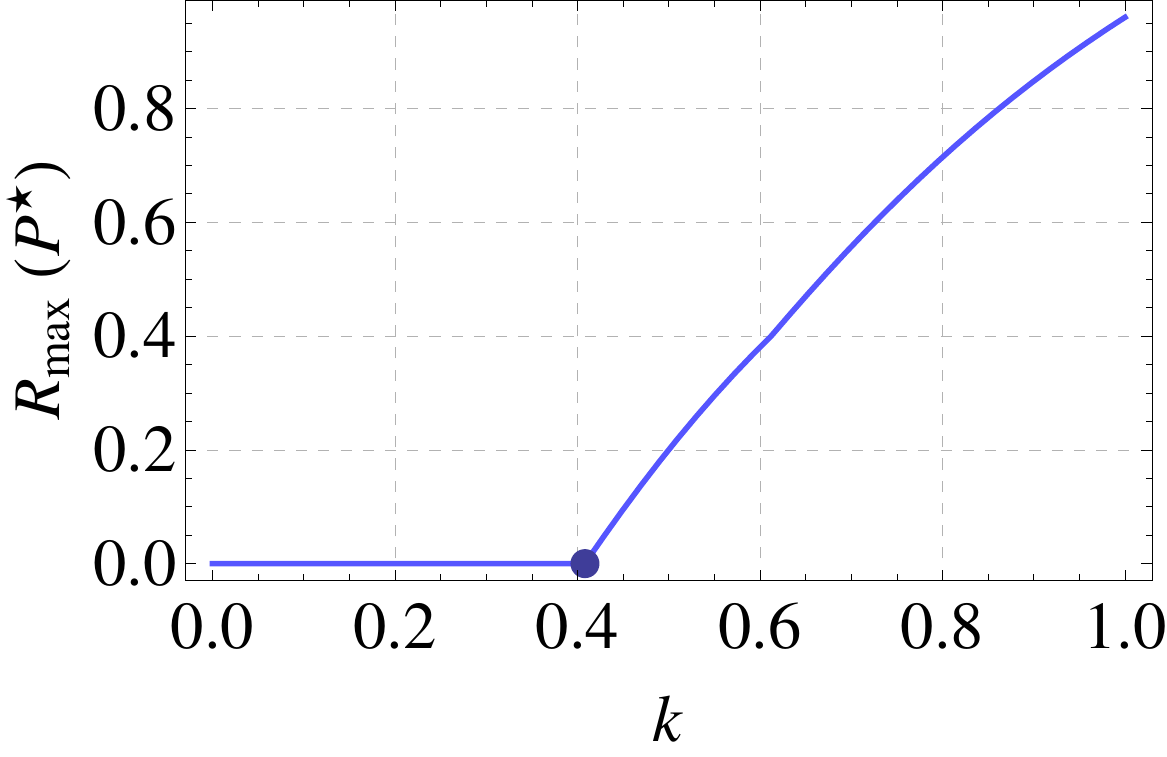}} \hspace{15mm}
\subfigure[\label{figamp2d}]
{\includegraphics[width=6.2cm]{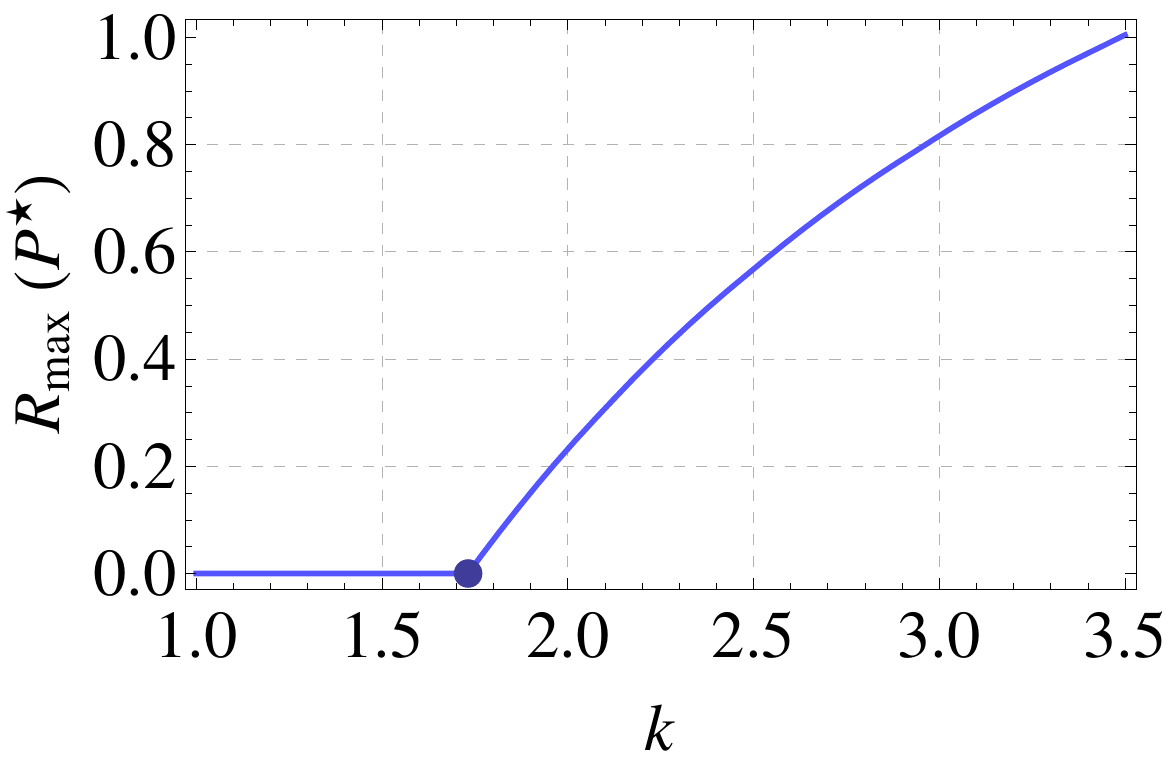}} \\
\subfigure[\label{figpur3d}]
{\includegraphics[width=8.0cm]{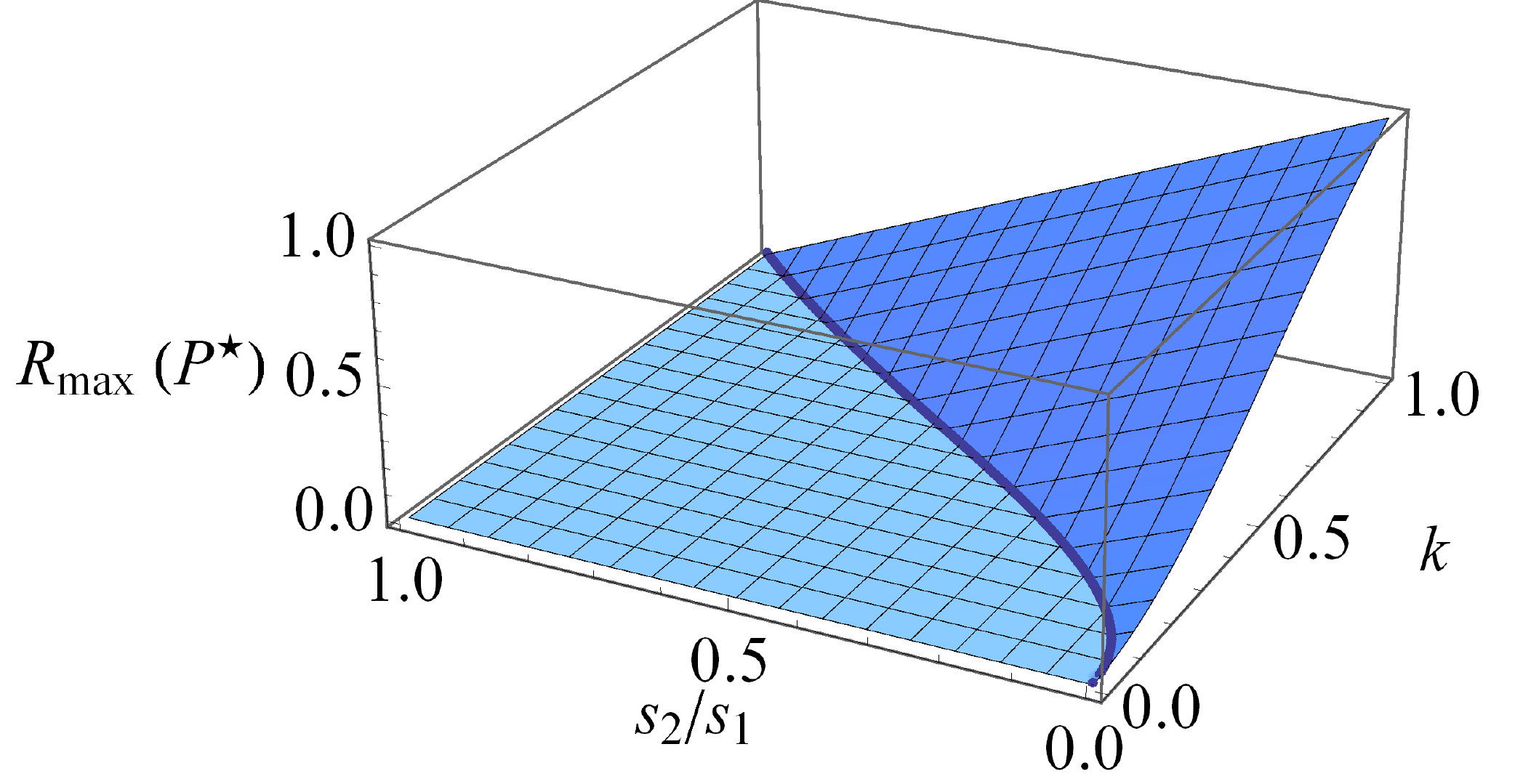}} \hspace{3mm}
\subfigure[\label{figamp3d}]
{\includegraphics[width=8.0cm]{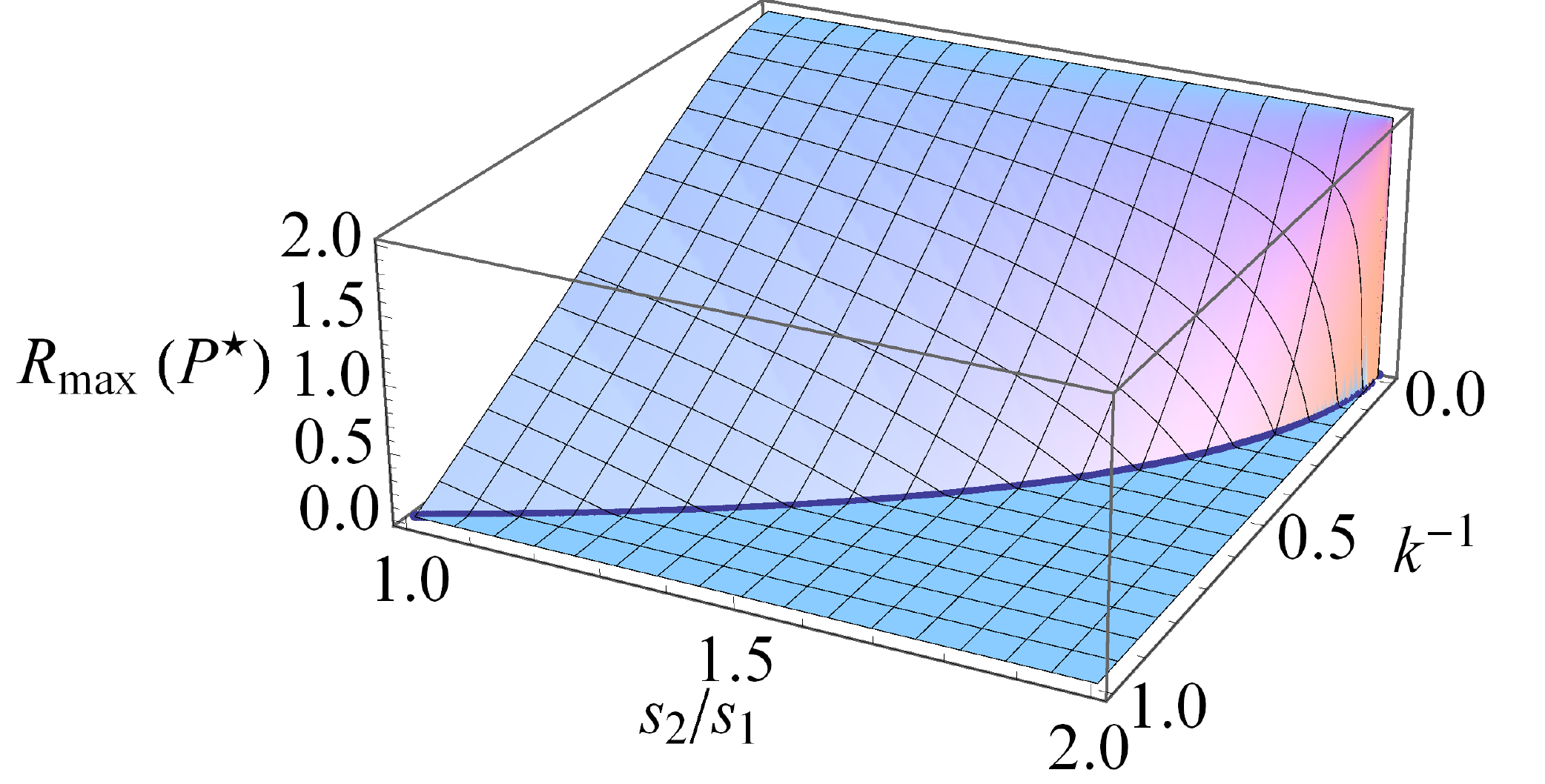}}
\caption{(Color online)
FoM's for optimal attenuation (left) and optimal amplification (right) of displaced Gaussian states.
Top row: (a) Plot of the minimax risk ${R}_{\max}$ [Eq.~(\ref{Rmaxpur})] versus  $k$ for the optimal attenuation procedure $P^{\star}$, where $s_1=0.8$ and $s_2=0.4$; the optimal value $k_0$ is highlighted with a big dot on the graph. (b) Plot of the minimax risk ${R}_{\max}$ [Eq.~(\ref{Rmaxpur})] versus  $k$ for the optimal amplification procedure $P^{\star}$, where $s_1=0.4$ and $s_2=0.8$; the optimal value $k_0$ is highlighted with a big dot on the graph.
Bottom row: 3D Plots of the minimax risk ${R}_{\max}$ versus the parameter $k$ and the output/input temperature ratio $s_2/s_1$ with $s_1=0.5$, for (c) the optimal attenuation procedure, and (d) the optimal amplification procedure; in both plots, the thick curve depicts $k_0$ as a function of $s_2/s_1$.}
  \label{figpuramp}
\end{figure*}

We start by defining a specific channel denoted in both cases $P^{\star}$, then show that it is optimal and compute the minimax risk. For $s_1>s_2$ and $k<1$, the attenuation channel is implemented by the action of a beamsplitter with reflectivity $k$ acting on an input mode $a$ prepared in a state $\Phi^{s_1}_\alpha$, and a second ancillary mode $b$ in the vacuum state $\omega=|0\rangle\langle 0|$. The output mode $c$ of the channel is \begin{equation}
c = k^2a + \sqrt{1-k^2}b.
\label{bs}
\end{equation}
For $s_2>s_1$ and $k>1$, the channel is a parametric amplifier, whose action is represented by the following transformation on the input mode $a$ and  an ancillary mode $b$ prepared in the vacuum state:
\begin{equation}
c=ka + \sqrt{k^2-1}b^\dagger .
\label{am}
\end{equation}

We note that for each pair $(s_1,s_2)$ there exists a range of parameters $k$ for which $R_{\rm minmax}(s_1,s_2,k)=0$, i.e., the procedures can be accomplished perfectly. Indeed it can be easily verified that, for $k$ given by
 \begin{equation}
k^{\rm att}_0(s_1, s_2) = \sqrt{\frac{s_2(1-s_1)}{s_1(1-s_2)}},
\quad k^{\rm amp}_0(s_1, s_2)= \sqrt{\frac{1-s_1}{1-s_2}},
\label{koeqn}
\end{equation}
the channels \eqref{bs} and respectively \eqref{am} produce exactly the target state
$\Phi^{s_{2}}_{k\alpha}$. Moreover, if $k<k_{0}$ then the output of $P^{\star}$ is the state $\Phi^{s}_{k\alpha}$ with $s<s_{2}$, and the target can be still perfectly achieved by adding an appropriate amount of Gaussian noise. For later use, when $k<k_{0}$ we will denote by the same symbol
$P^\star$ this modified channel. From now on we consider the less trivial situation $k\geq k_{0}$, corresponding to the regime where perfect amplification or attenuation are impossible. We then state the following theorem and lemma, whose proofs are given in Appendix A:
\begin{theorem}\label{theorem21}
If $k<k_0$ then the minimax risk for attenuation (amplification) is zero.
If $k\geq k_0$, the minimax procedure is $P^\star$, i.e. the beamsplitter (\ref{bs}) in the case of attenuation, or the parametric amplifier (\ref{am}) in the case of amplification:
\begin{equation}
R_{\rm max}(P^\star;s_{1},s_{2},k)= R_{\rm minmax}(s_{1},s_{2},k).
\label{minmax}
\end{equation}
\end{theorem}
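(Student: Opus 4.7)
The case $k<k_0$ is already handled by the remark immediately preceding the theorem: by noise padding, the modified $P^\star$ produces exactly $\Phi^{s_2}_{k\alpha}$, so its risk is zero. The substantive content is the case $k\ge k_0$, where I would mimic the classical argument: first reduce to channels covariant under the displacement action $\alpha\mapsto k\alpha$, then classify covariant channels and optimise within that class.

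For the symmetrisation, given any channel $P$, I would consider its displacement-averaged version
\[
\tilde P(\rho)\;:=\;\lim_{R\to\infty}\frac{1}{\pi R^2}\int_{|\beta|\le R} W_{-k\beta}\,P\!\left(W_{\beta}\rho W_{-\beta}\right)W_{k\beta}\,d^{2}\beta,
\]
interpreted as in the covariant-channel optimisation technique of \cite{GM,GBA}. By construction $\tilde P(W_\beta\rho W_{-\beta})=W_{k\beta}\tilde P(\rho)W_{-k\beta}$, and by translation invariance of the supremum in the definition of $R_{\max}$ together with convexity of the trace norm, one obtains $R_{\max}(\tilde P;s_1,s_2,k)\le R_{\max}(P;s_1,s_2,k)$. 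For a covariant channel the map $\alpha\mapsto\|\tilde P(\Phi^{s_1}_\alpha)-\Phi^{s_2}_{k\alpha}\|_1$ is constant, so the minimax problem reduces to minimising $\|\tilde P(\Phi^{s_1}_0)-\Phi^{s_2}_0\|_1$ over $\alpha=0$.

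Next I would classify the channels satisfying the displacement covariance with multiplier $k$. By a Stinespring/Gaussian argument, any such channel is Gaussian and must coincide with $P^\star$ composed on the output with an additional phase-invariant Gaussian noise channel $N_{V}$ of some covariance $V\ge0$; this is the quantum analogue of the classical form $K(X)=kX+Z$ in \eqref{ccov}, where the multiplier $k$ is forced by the covariance law and the only remaining freedom is the amount of added centered noise. Because $P^\star(\Phi^{s_1}_0)$ is itself a centred thermal state $\Phi^{s'}_0$ whose parameter $s'$ satisfies $s'\ge s_2$ precisely when $k\ge k_0$ (this is what defines $k_0$ in \eqref{koeqn}), adding any extra thermal noise $N_V$ can only increase the effective temperature and hence the distance to $\Phi^{s_2}_0$. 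Concretely, both states are diagonal in the Fock basis, so $\|\Phi^{s'}_0-\Phi^{s_2}_0\|_1$ is the $\ell^1$-distance between the geometric distributions with parameters $s'$ and $s_2$, which is monotone in $s'$; this pins the optimal $V$ at zero and identifies $P^\star$ as the minimiser.

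The main obstacle is making the displacement-averaging rigorous: the Heisenberg--Weyl group is non-compact, so the invariant average is only a principal-value type limit, and one must check that $\tilde P$ is indeed a well-defined channel and that the risk inequality $R_{\max}(\tilde P)\le R_{\max}(P)$ survives the limit. I would combine a preliminary symmetrisation over the compact rotation subgroup $\alpha\mapsto e^{i\theta}\alpha$ (to reduce to phase-invariant channels, where spectral and trace-class issues simplify) with the bounded-domain averaging above, following the scheme of \cite{GM,GBA}. The structural step (that every covariant Gaussian channel with multiplier $k$ is $P^\star$ plus Gaussian noise) is the other delicate point, but it is essentially forced by matching first and second moments of the Weyl covariance relation.
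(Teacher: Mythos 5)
Your reduction to displacement-covariant channels and the subsequent phase-invariance reduction are sound and match the paper's strategy (the paper delegates the non-compact averaging to the covariance results of \cite{C,O}; your compact-subgroup-first scheme is a reasonable way to make it rigorous). The genuine gap is in your structural classification: covariance under $\rho\mapsto W_\beta\rho W_{-\beta}$ with multiplier $k$ does \emph{not} force the channel to be Gaussian. The covariance relation constrains only how the mode operators transform; the full class of covariant attenuation (amplification) channels is the beamsplitter (parametric amplifier) with multiplier $k$ and an \emph{arbitrary} ancilla state $\tau$, and after the phase-covariance reduction $\tau$ is still an arbitrary mixture of Fock states $\sum_i\tau_i|i\rangle\langle i|$ --- e.g.\ $\tau=|1\rangle\langle 1|$ gives a covariant, phase-covariant, non-Gaussian channel that your classification misses. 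Consequently your final monotonicity argument (``extra thermal noise only raises the temperature, and the $\ell^1$-distance between geometric distributions is monotone in the temperature'') only rules out thermal ancillas; it says nothing about Fock-state or other diagonal ancillas, whose outputs are not thermal states at all.

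The paper closes exactly this gap with a stochastic-ordering argument: Lemma \ref{lemma42} shows that for \emph{every} ancilla $\tau$ the Fock-basis output distribution $p^\tau$ stochastically dominates the vacuum-ancilla output $p^\omega$, via a two-mode-squeezed-vacuum rewriting of the input and an explicit combinatorial bound on the cumulative sums $\sum_l d^{(k)}_l$ for Fock ancillas $|k\rangle\langle k|$; a second lemma then converts the chain $q\preceq p^\omega\preceq p^\tau$ (the first ordering being precisely the condition $k\ge k_0$) into the norm inequality $\|p^\tau-q\|_1\ge\|p^\omega-q\|_1$. Without some substitute for this dominance statement over all diagonal ancillas, your proposal does not establish optimality of $P^\star$.
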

\begin{lemma}\label{lemma22}
If $k<k_0$, then the minimax risk for attenuation (amplification) is given by
\begin{equation}
R_{\rm minmax}(s_{1},s_{2},k)=
%2\Big(\frac{s_1 k^2}{1-s_1+s_1 k^2}\Big)^{m_0 +1} - 2s_2^{m_0 + 1}\,,
2(\tilde{s}^{m_{0}+1} - s_{2}^{m_{0}+1}) ,
\label{Rmaxpur}
\end{equation}
where $m_0$ is the integer part of
%\begin{equation}
%\frac{\log(1-s_1) - \log(1-s_1 + s_1 k^2) - \log(1-s_2)}{\log s_2 - \log s_1 k^2 + \log(1-s_1+s_1 k^2)} .\nonumber
$$
 \ln [ (1-\tilde{s})/(1-s_{2})] /\ln(s_{2}/\tilde{s}),
$$
%\end{equation}
and $\tilde{s}$ takes the values
\begin{equation}
\tilde{s}_{\rm att} = \frac{s_1 k^2}{1-s_1 + s_1 k^2}, ~{\rm and}~
\tilde{s}_{\rm amp}= 1-\frac{1-s_1}{k^2}.
\label{tildes}
\end{equation}
in the case of attenuation and respectively amplification.

%\begin{equation}\label{Rmaxamp}
%R_{max}(s_{1},s_{2},k)=2\Big(1 - \frac{1 - s_1}{k^2}\Big)^{m_0 +1} - 2s^{m_0 + 1}_2\,,\end{equation} where $m_0$ is the integer part of
%\begin{equation}
%\frac{\log\big(k^2 (1-s_2)\big) - \log(1-s_1)}{\log(1-\frac{1-s_1}{k^2}) - \log(s_2)} \nonumber
%\end{equation}
\end{lemma}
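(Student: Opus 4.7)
The plan is to invoke Theorem~\ref{theorem21}, which identifies $P^\star$ as the minimax channel, so the minimax risk equals $R_{\max}(P^\star;s_1,s_2,k)=\sup_\alpha\|P^\star(\Phi^{s_1}_\alpha)-\Phi^{s_2}_{k\alpha}\|_1$. Both the beamsplitter (\ref{bs}) and the parametric amplifier (\ref{am}) are covariant under Weyl displacements: conjugating the input by $W_\alpha$ on mode $a$ (and leaving the vacuum ancilla untouched) produces an output displaced by $W_{k\alpha}$ on mode $c$. Consequently the risk is independent of $\alpha$, and it suffices to evaluate it at $\alpha=0$, i.e.\ on centred thermal states.

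Next I would compute $P^\star(\Phi^{s_1})$ explicitly. Tracing out the vacuum ancilla after the beamsplitter leaves a phase-invariant Gaussian state on mode $c$, and a direct calculation of the output mean photon number gives $k^{2}\bar n_1$ where $\bar n_1=s_1/(1-s_1)$; inverting the thermal relation $\tilde s/(1-\tilde s)=k^{2}\bar n_1$ yields $\tilde s_{\rm att}=s_1k^2/(1-s_1+s_1 k^2)$, as in (\ref{tildes}). The analogous computation for the two-mode squeezing transformation (\ref{am}) adds the noise contribution $k^{2}-1$ from the amplified vacuum, giving output mean photon number $k^{2}\bar n_1+(k^{2}-1)$ and hence $\tilde s_{\rm amp}=1-(1-s_1)/k^2$. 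In both cases $P^\star(\Phi^{s_1})=\Phi^{\tilde s}$, and a line of algebra against (\ref{koeqn}) shows that the standing hypothesis $k\geq k_0$ is equivalent to $\tilde s\geq s_2$.

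Since $\Phi^{\tilde s}$ and $\Phi^{s_2}$ are both diagonal in the Fock basis, the trace norm reduces to the $\ell_1$-distance of eigenvalue sequences:
\begin{equation*}
\|\Phi^{\tilde s}-\Phi^{s_2}\|_1=\sum_{n=0}^{\infty}\bigl|(1-\tilde s)\tilde s^{n}-(1-s_2)s_2^{n}\bigr|.
\end{equation*}
Equating the two terms gives the crossover $n^\ast=\ln[(1-\tilde s)/(1-s_2)]/\ln(s_2/\tilde s)>0$ (both logarithms are negative since $\tilde s\geq s_2$). At $n=0$ the first summand is smaller (since $1-\tilde s\leq 1-s_2$), while the ratio $\tilde s/s_2\geq 1$ forces the first summand to eventually dominate; hence the summand is nonpositive for $n\leq m_0:=\lfloor n^\ast\rfloor$ and nonnegative for $n\geq m_0+1$. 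Splitting the sum at this index and applying $\sum_{n=0}^{N}(1-q)q^n=1-q^{N+1}$ to each of the four resulting finite or cofinite geometric series, the $1$'s cancel and one obtains exactly $2(\tilde s^{\,m_0+1}-s_2^{m_0+1})$.

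The conceptually substantive step is the covariance reduction coupled with the explicit identification of the output thermal parameter $\tilde s$ for both channels from their Heisenberg-picture mode transformations. Once $P^\star(\Phi^{s_1})=\Phi^{\tilde s}$ is in hand, the rest is routine: the only delicate point is to correctly track the sign of $(1-\tilde s)\tilde s^{n}-(1-s_2)s_2^{n}$ so that $m_0$ is identified as the integer part of $n^\ast$, after which the geometric sums telescope to the stated closed form.
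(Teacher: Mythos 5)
Your proposal is correct and follows essentially the same route as the paper's proof: reduce via Theorem~\ref{theorem21} to the centred states, identify $P^\star(\Phi^{s_1})=\Phi^{\tilde s}$, and evaluate the $\ell_1$-distance of the two geometric eigenvalue sequences by splitting at the crossover index $m_0$; the only cosmetic difference is that you extract $\tilde s$ from the output mean photon number rather than from the paper's hyperbolic parametrisation $\sinh\tilde t=k\sinh t$ (resp.\ $\cosh\tilde t=k\cosh t$). You also correctly read the standing hypothesis as $k\geq k_0$ (the ``$k<k_0$'' in the statement is evidently a typo, since that case has zero risk by Theorem~\ref{theorem21}).
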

The risk for both processes is plotted in Fig.~\ref{figpuramp} [(a)-(d)]. In Figure \ref{figk0} we plot $k_0$ for both processes as a function of the input and
output purity  parameters $s_1$ and $s_2$.

\begin{figure}[tb]
\subfigure[\label{figpurk0}]
{\includegraphics[width=4.1cm]{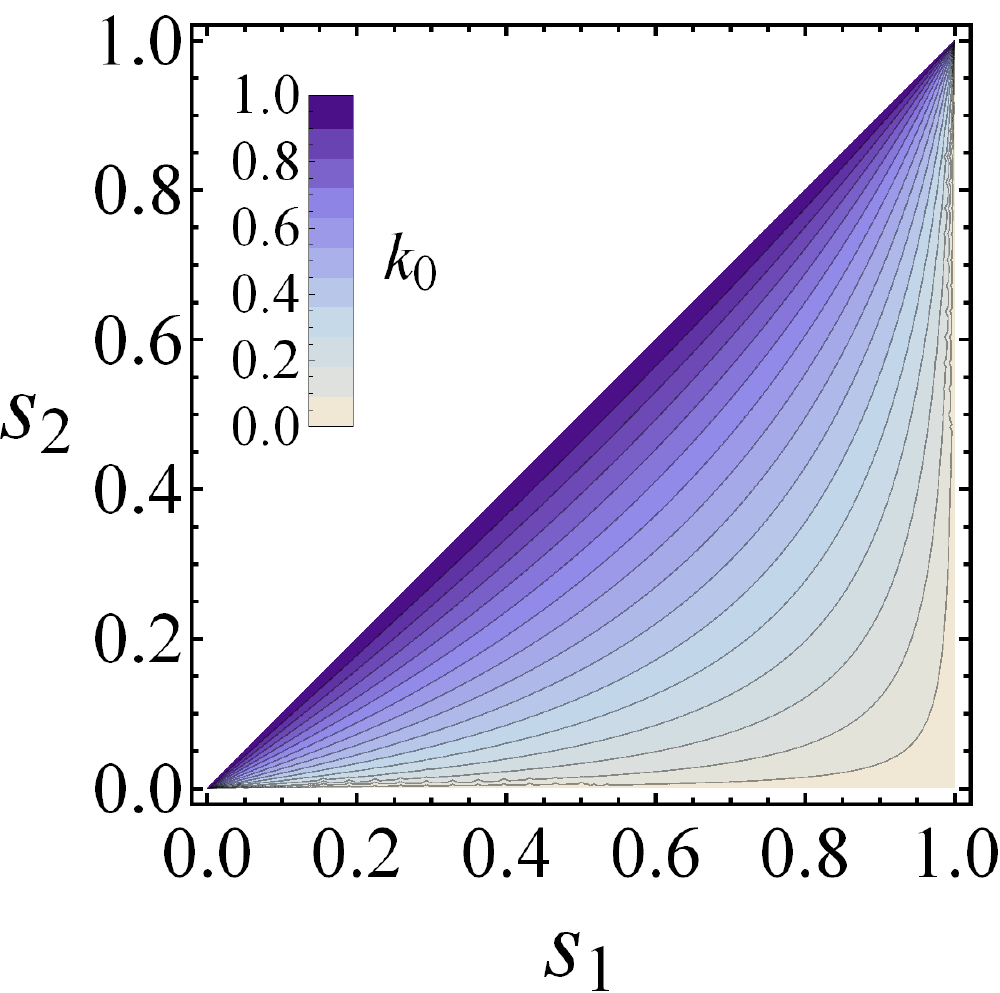}} \hspace{2mm}
\subfigure[\label{figampk0}]
{\includegraphics[width=4.1cm]{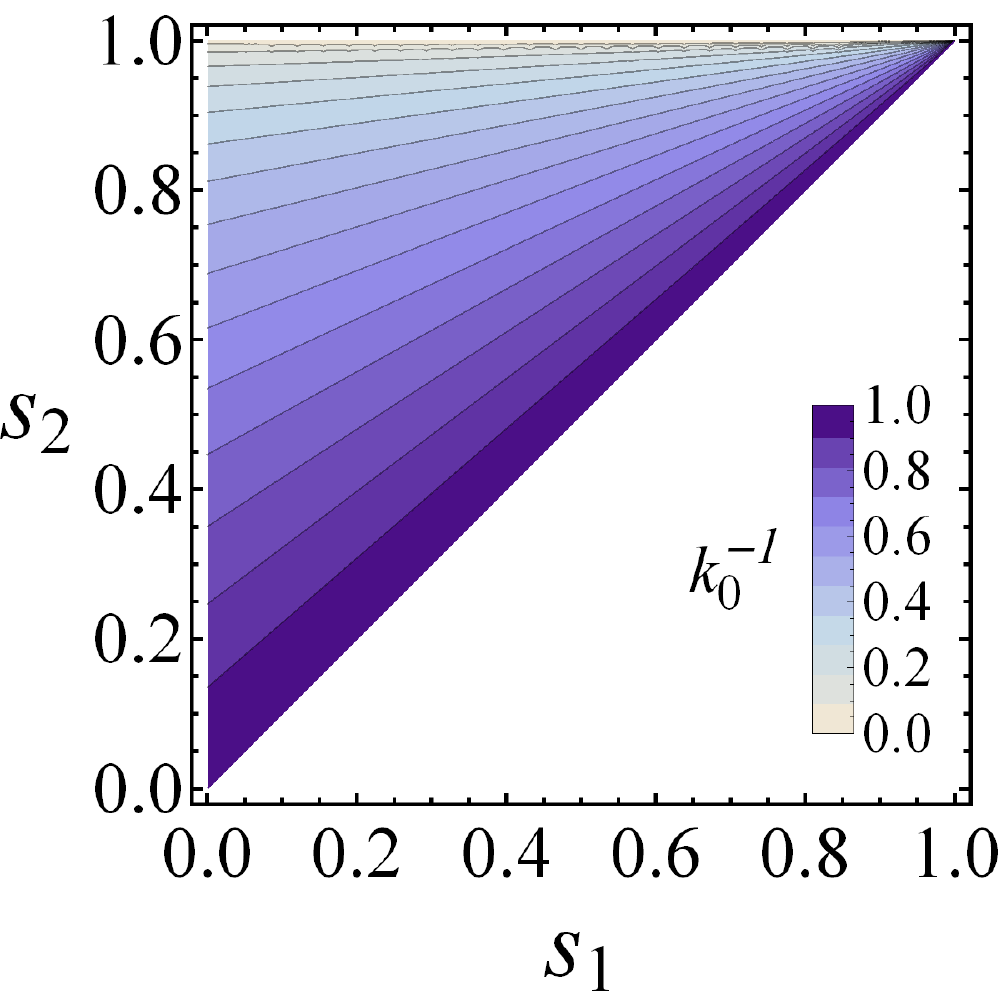}}
\caption{(Color online) Contour plots of (a) $k_0$ versus the purity parameters $s_1$ and $s_2$ for the optimal attenuation procedure, and (b) ${k_0}^{-1}$ versus the purity parameters $s_1$ and $s_2$ for the optimal amplification procedure.}
\label{figk0}
\end{figure}

\section{Asymptotically optimal purification and dilution for ensembles of qubits}
\label{sec3}
We turn now to the problem of finding optimal purification and dilution schemes for ensembles of identical qubits. We denote by $\rho_{\textbf{r}}$ the qubit state with Bloch vector $\textbf{r}=(r_{x},r_{y},r_{z})$
\begin{equation}\label{QubIN}
\rho_{\textbf{r}}=\frac{1}{2}(\mathbf{1}+\textbf{r}\boldsymbol{\sigma})\,,
\end{equation} where
$\textbf{r}\boldsymbol{\sigma}=r_x\sigma_{x} +r_y\sigma_{y} +r_z\sigma_{z}$ and $\sigma_{i}$ are the Pauli matrices. We are given $n$ identical qubits prepared in the state $\rho_{\textbf{r}}$ and we would like to produce $m$ identical qubits in the state $\rho_{\lambda \textbf{r}}$ with $m$ as large as possible, for a fixed positive parameter  $\lambda$.
When $\lambda >1$, the aim is to ``purify'' the state, and when $0<\lambda <1$ we want to ``dilute'' the state with the benefit of obtaining more copies.
Clearly, for purification the output state is physical only if $\lambda$ satisfies
$\lambda\| \textbf{r}\|\leq 1$. This can be achieved by letting $\lambda$ depend on $\textbf{r}$, or by restricting to those input states which satisfy the property. To illustrate the latter, suppose we would like to reverse the action of  the
depolarising channel
$$
C :\rho_{\textbf{r}}\mapsto \rho_{\textbf{r}/\lambda},
$$
then the input states of the purification channel automatically satisfy the requirement. As to the former, our asymptotic analysis will produce a {\it local} FoM which only depends on the value of $\lambda$ at a particular state, so for simplicity we will assume it to be constant.

A purification (dilution) procedure is a quantum channel
$$
Q_{n}:M( \mathbb{C}^{2^{n}})\to M( \mathbb{C}^{2^{m}})
$$
mapping $n$-qubit states to $m$-qubit states, and its performance is measured by the  FoM (risk)
\begin{equation}
R(Q_n;\textbf{r},\lambda):=\|Q_{n}(\rho_{\textbf{r}}^{\otimes n})- \rho^{\otimes m}_{\lambda\textbf{r}}\|_1.
\end{equation}
Note that this is a global rather than a local risk, in the sense that it measures
the distance between the output and the {\it joint} product state, instead of comparing their restrictions to each single system.  Note also that the risk at a fixed point $\textbf{r}$ can always be made equal to zero by simply preparing the target state for that point. To take into account the {\it overall} performance of a procedure, one can either integrate the risk with respect to a prior distribution over states (Bayesian statistics) or take the maximum over all states  (frequentist statistics). We adopt the latter viewpoint, and in addition we will consider a more refined version of the maximum risk called
{\it local maximum risk} around $\textbf{r}_0$
\begin{equation}
R_{\rm max}(Q_n;\textbf{r}_0,\lambda):=\sup_{\|\textbf{r}-\textbf{r}_0\|\leq n^{-\frac{1}{2}+\epsilon}} R(Q_n;\textbf{r},\lambda).
\end{equation}
In asymptotic statistics the local maximum risk is more informative that the `global'  one since it captures the behaviour of the procedure around any point in the parameter space, rather than that of the worst case. The radius of the ball over which we maximise is  slightly larger than the precision of $n^{-1/2}$ with which we can estimate the state parameters, so that the definition of the local risk does not amount to assuming any prior information about the parameter.
Indeed one can use a small sample $n^{1-\epsilon}\ll n$ of the input systems to obtain a rough estimate of the Bloch vector $\textbf{r}$ such that the obtained estimator
$\textbf{r}_{0}$ will be in a ball of size $n^{-1/2+\epsilon}$ around $\textbf{r}$, with probability converging to one as $n\to \infty$. With this additional information, one can then apply the purification (dilution) channel to the remaining systems, with no loss in the asymptotic optimal risk (see below). The local maximum risk is a standard FoM in asymptotic statistics  and it is has been used in quantum statistics in \cite{GK,GJK,GKot} to which we refer for more details, and for its relation to Bayesian methods.

Up to this point the number of input and output systems $n$ and $m$ were fixed, with $n$ considered to be large. However, for a non-trivial asymptotic analysis, $m$ should be an increasing function of $n$, more precisely we  consider the optimal purification (dilution) procedure for a fixed rate
$$
\Lambda= \lim_{n \rightarrow \infty} \frac{m(n)}{n} >0.
$$
Indeed from our fixed rate analysis it can easily be deduced that in the case of
a sub-linear dependence $m(n)=o(n)$, one can produce $m$ output copies of arbitrary purity with vanishing local maximum risk. On the other hand, by similar reasonings, one may expect that if $m(n)/n$ is unbounded, then the best strategy should be to estimate the state and reprepare $m$ independent copies of the estimator (`measure and prepare' strategy \cite{Bae}). We leave this statement as a conjecture, and from now on we will assume that the rate $\Lambda$ is given and fixed. For any sequence $\mathbf{Q}:= \{Q_{n}\}$ of procedures we define the {\it asymptotic local maximum risk} at $\textbf{r}_0$ by
\begin{equation}
R(\mathbf{Q};\textbf{r}_0,\lambda,\Lambda):=
\limsup_{n\rightarrow\infty}R_{\rm max}(Q_n;\textbf{r}_0,\lambda),
\end{equation}
and we would like to find an optimal (minimax) strategy whose asymptotic risk is equal to the {\it minimax risk}
\begin{equation}
R_{\rm minmax}(\textbf{r}_0,\lambda,\Lambda) :=
\limsup_{n\rightarrow\infty}\inf_{Q_n}R_{\rm max}(Q_n;\textbf{r}_0,\lambda).
\label{qmmx}
\end{equation}
In other words, we will answer the following question: for given purification (dilution) constant scale factor $\lambda$ and input-output rate $\Lambda$, what is the minimax risk $R_{\rm minmax}(\textbf{r}_0,\lambda,\Lambda)$ and which is the procedure that achieves it? In particular, we will find that the minimax risk is zero for a range of parameters $(\textbf{r}_0,\lambda,\Lambda)$, and we will identify the maximum value $\Lambda_0^{\rm pur}$ ($\Lambda_0^{\rm dil}$) for which the purification (dilution) can be performed with asymptotically vanishing risk. These rates are the qubit analogues of the constants $k_0$ defined in \eqref{koeqn}.

The main technical tool is the theory of {\it local asymptotic normality} (LAN) developed in \cite{GK,GJK,GJ,GK2} as an extension of a key concept from
(classical) asymptotic statistics \cite{LC,vanderVaart}. LAN means that the
joint quantum state of identically prepared (finite-dimensional) systems can be approximated in a strong sense by a quantum-classical Gaussian state of fixed variance, whose mean encodes the information about the parameters of the original state. In this way, a number of asymptotic problems can be reformulated in terms of Gaussian states, for which the explicit solution can be found, e.g. state estimation \cite{GK3}, teleportation benchmarks \cite{GBA}, quantum learning \cite{GKot}, system identification \cite{G}.  For the purposes of this paper we give a brief description of LAN for mixed qubit states. Let
$$
\rho_{\textbf{r}_0+ \textbf{u}/ \sqrt{n}}=
\frac{1}{2}\big(\textbf{1}+(\textbf{r}_0+\textbf{u}/\sqrt{n})\boldsymbol{\sigma}\big)
$$
denote a qubit state in a the neighbourhood of a fixed and known state
$\rho_{\textbf{r}_0}$, which is uniquely characterised by an  {\it unknown} local parameter $\textbf{u}$. The family of $n$-qubit states
\begin{equation}\label{eq.local.qubit.model.}
\mathcal{P}_{n}:= \left\{ \rho_{\textbf{u}}^{n}:= \rho_{\textbf{r}_0+ \textbf{u}/ \sqrt{n}}^{\otimes n} :
\| \textbf{u}\| \leq n^{\epsilon}\right\}
\end{equation}
will be called the local statistical model at $\textbf{r}_0$.
%Each $\rho_{\textbf{u}}^{n}$ is an input state of the purification (dilation) channel
%$Q_{n}$ whose performance is measured by the local maximum risk.
Additionally, we define a classical-quantum Gaussian model
\begin{equation}\label{eq.gaussian.model}
\mathcal{N}:=\left\{ N_{\textbf{u}}\otimes \Phi_{\textbf{u}} :\textbf{u}\in \mathbb{R}^{3}\right\}
\end{equation}
where $N_{\textbf{u}} := N(u_{z}, 1-\|\textbf{r}_{0}\|^{2})$ is a normal (Gaussian) distribution on $\mathbb{R}$
with mean $u_{z}$ and variance $1-\|\textbf{r}_{0}\|^{2}$, and
$$
\Phi_{\textbf{u}} =
W_{\alpha} \Phi^{s} W_{\alpha}^{\dagger}, \qquad
\alpha= \frac{u_{x}+iu_{y}}{2\|\textbf{r}_{0}\|},\,\,
s= \frac{1-\|\textbf{r}_{0}\|}{1+\|\textbf{r}_{0}\|},
$$
is a displaced thermal Gaussian state of a one-mode CV system (cf. Section~\ref{sec2b}) with known covariance matrix characterised by the purity parameter $s$ (with zero squeezing) and unknown means proportional to $(u_{x},u_{y})$. Now, the mathematical statement of LAN \cite{GK} is that there exist two sequences of channels
$\mathbf{T}= \{T_{n}\}$ and $\mathbf{S}=\{S_{n}\}$ with
\begin{eqnarray*}
T_{n}&:& M(\mathbb{C}^{2^{n}})\to L^{1}(\mathbb{R})\otimes \mathcal{T}_{1}\\
S_{n}&:& L^{1}(\mathbb{R})\otimes \mathcal{T}_{1}\to M(\mathbb{C}^{2^{n}})
\end{eqnarray*}
 such that
\begin{eqnarray*}
&&
\lim_{n\to \infty }\sup_{\|{\bf u} \|\leq n^\epsilon }\| T_{n} (\rho_{\textbf{u}}^{n})- N_{\textbf{u}}\otimes \Phi_{\textbf{u}}\|_{1} =0,\\
&&
\lim_{n\to \infty} \sup_{\|{\bf u} \|\leq n^\epsilon }\| \rho_{\textbf{u}}^{n}- S_{n}(N_{\textbf{u}}\otimes \Phi_{\textbf{u}} ) \|_{1} =0.
\end{eqnarray*}
In the above formulas, $\mathcal{T}_{1}$ is associated to  the trace-class operators of the CV system, and the norm-one $\| \cdot \|_1$ denotes respectively the trace-norm for the quantum part and the $L_{1}$-norm for the classical part. The physical implementation for the channels $T_n$ and $S_n$, detailed in \cite{GKJ}, is realised via a spontaneous emission coupling of the $n$ qubits to a Bosonic field, and subsequently letting the qubits 'leak' into this environment. Since there is no correlation between atoms and field, the statistical model decouples into a Gaussian state $\Phi_{\textbf{u}}$ associated to the field, and a classical statistical mixture of atoms, distributed according to $N_{\textbf{u}}$. The corresponding operations of attenuation and amplification may then be carried out in the field in the optimal way.

In our case we need to consider mixed qubit states, which means that the collective state has non-zero components in all irreducible representations of SU(2) (all values of the total spin). In fact the traces of the different blocks of given total spin form a probability distribution which (after centring and scaling) converges to the classical Gaussian component of the limit model in LAN. A typical block state of definite total spin can be mapped isometrically into the Fock space of a one mode CV system, and converges to the quantum Gaussian component of the limit model. This transfer can be implemented in principle by a creation-annihilation coupling with a Bosonic field in which the state `leaks'' after a short time. The classical part (total spin) can be measured  by coupling subsequently with another Bosonic field, and performing an indirect measurement of $L_{z}$ in the  field. Since after the first step, all blocks are in the $|j,j\rangle$ state, a measurement of $L_{z}$ is implicitly a measurement of the total spin.

The above convergence can be interpreted as follows: the quantum statistical models $\mathcal{P}_{n}$ can be mapped into the Gaussian model
$\mathcal{N}$ and vice-versa, by means of physical operations (quantum channels) with vanishing norm-one error. From the statistical point of view, in many situations this  convergence is strong enough to allow us to map a statistical problem concerning the model $\mathcal{P}_{n}$ to a similar  one concerning the simpler model $\mathcal{N}$.

In the case of purification or dilution of qubits, the mapping into a Gaussian problem is illustrated in the diagram below. We first give a detailed
description of the steps involved, and then prove that our procedure is optimal
(asymptotically minimax).

\begin{equation}\begin{CD}
  \rho^{\otimes n}_{\textbf{r}_0+\frac{\textbf{u}}{\sqrt{n}}} @>Q^{\star}_n>> \rho_{\lambda\textbf{r}_0+\frac{\textbf{u}^{\prime}}{\sqrt{m}}}^{\otimes m} \\
   @V{T_n}VV @AA{S_m}A      \\
    N_{\textbf{u}}\otimes  \Phi^{s_1}_{\textbf{u}} @>K^{\star}\otimes P^{\star}>>
    N_{\textbf{u}'}\otimes\Phi^{s_2}_{\textbf{u}'} \\
   \end{CD}\label{cd}\end{equation}

\begin{figure}[t]
\includegraphics[width=8.5cm]{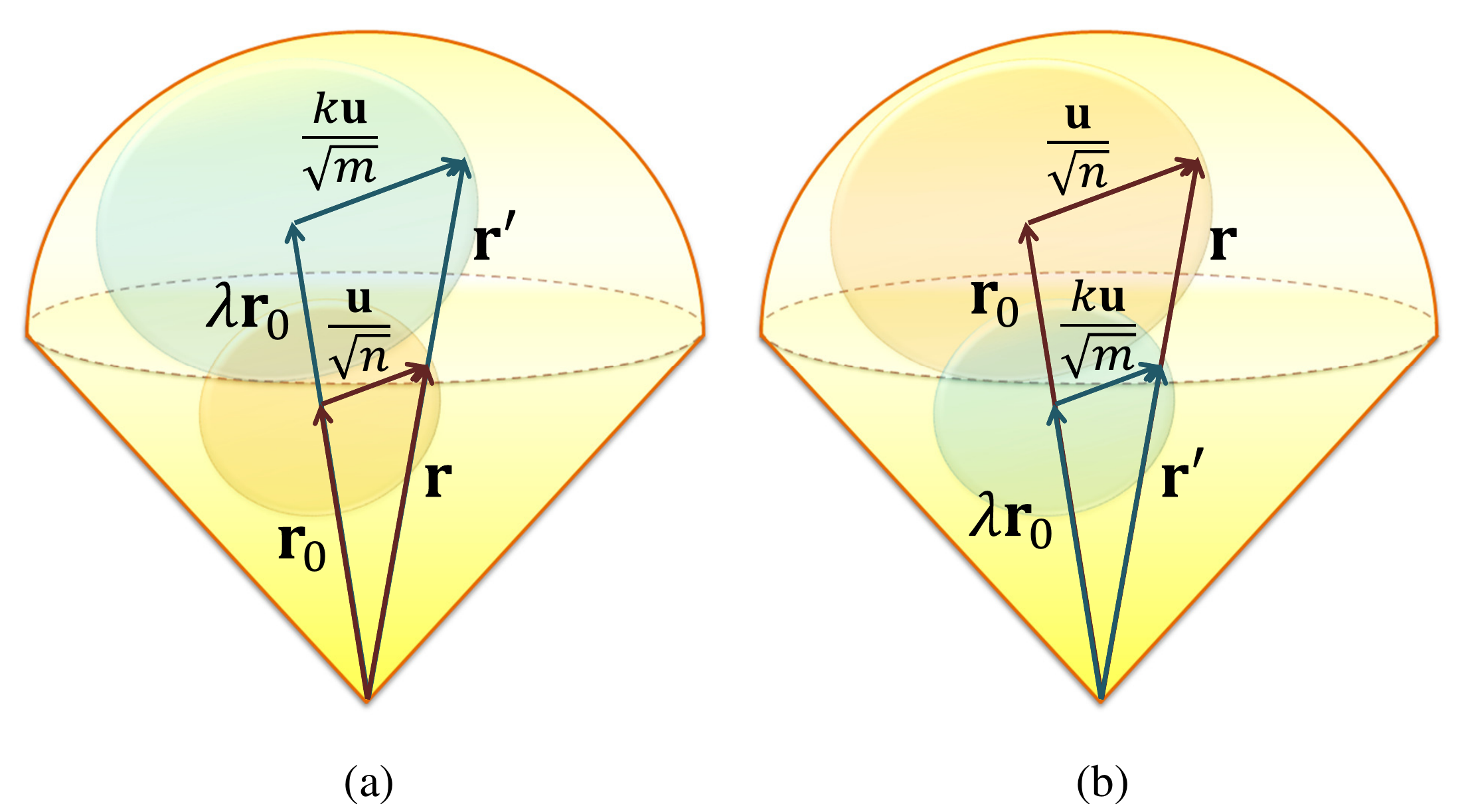}
\caption{(Color online) Schematic Bloch-sphere geometry for (a) purification and (b) dilution of qubits. A change in Bloch vector magnitude by factor $\lambda$ is reflected by a change in the corresponding Gaussian state displacement. The two are then related by $\lambda=k\sqrt{n/m}$.}
\label{figbloch}
\end{figure}

\smallskip

{\it Step 1: Localisation}. We are given $n$ identical qubits in an arbitrary mixed state $\rho_{\textbf{r}}$. We measure a small proportion $n^{1-\epsilon}\ll n$ of the qubits, to obtain a rough estimator $\rho_{\textbf{r}_{0}}$ of the state. By standard concentration results, with asymptotically vanishing probability of error, the actual state is in a local neighbourhood of $\rho_{\textbf{r}_{0}}$ of size $n^{-1/2+\epsilon}$, so that the remaining
qubits can be parametrised as in the local model \eqref{eq.local.qubit.model.}.
In the same time, the target single-system output state $\rho_{\lambda \textbf{r}}$ belongs to the local model
$$
\mathcal{Q}_{m}:=
\left\{
\rho^{\otimes m}_{\lambda\textbf{r}_0 + \frac{\textbf{u}^{\prime}}{\sqrt{m} }} :
\|\textbf{u}^{\prime}\| \leq \lambda \Lambda^{1/2-\epsilon} m^{\epsilon}
\right\}
$$
with local parameter (see Figure \ref{figbloch})
$$
\textbf{u}^{\prime}= \lambda \sqrt{\frac{m}{n}} \textbf{u} =
\lambda \Lambda^{1/2} \textbf{u} := k\textbf{u}.
$$
{\it Step 2: Transfer to the Gaussian state}. We apply the map $T_{n}$ to the
 qubits and obtain a classical random variable and a single-mode CV system whose states are approximately Gaussian [see \eqref{eq.gaussian.model}]
 \begin{equation}\label{eq.gaussian.input}
 N_{\textbf{u}} \otimes \Phi_{\textbf{u}} =
 N(u_{z} ,1- \|\textbf{r}_{0} \|^{2}) \otimes W_{\alpha} \Phi^{s} W_{\alpha}^{\dagger}.
\end{equation}

{\it Step 3: Optimal Gaussian purification (amplification)}.
Since $\mathcal{Q}_{m}$ is a local model around $\lambda\textbf{r}_{0}$, the corresponding parameter of the associated Gaussian model is
$$
s_{2}:= \frac{1- \lambda\|\textbf{r}_{0}\|}{1+\lambda\|\textbf{r}_{0}\|}.
$$
and the family of Gaussian states is
\begin{equation}\label{eq.gaussian.output}
%N^{\prime}_{\textbf{u}^{\prime}} \otimes \Phi^{\prime}_{\textbf{u}^{\prime}}:=
N(ku_{z}, 1- \|k\textbf{r}_{0} \|^{2}) \otimes W_{k\alpha } \Phi^{s_{2}}W_{k\alpha}^{\dagger}.
\end{equation}
In this step we attenuate (amplify) the Gaussian state \eqref{eq.gaussian.input} in order to map it into, or close to \eqref{eq.gaussian.output}, as described in Section~\ref{sec2}.  This means that we apply the optimal channel $K^\star$ defined in section \ref{sec2a} to the classical component $N_{\bf u}$, and the optimal quantum attenuation (amplification) channel $P^{\star}$ defined in Theorem \ref{theorem21}, to the quantum Gaussian component $\Phi_{\bf u}$.

{\it Step 4: Mapping back to the qubits}. We apply the channel $S_{m}$ to the classical variable and the output of the attenuation (amplification) channel to obtain a state of $m$ qubits in the neighbourhood of the state
$\rho_{\lambda\textbf{r}_{0}}$.

\smallskip

\begin{figure}[tb]
\includegraphics[width=6cm]{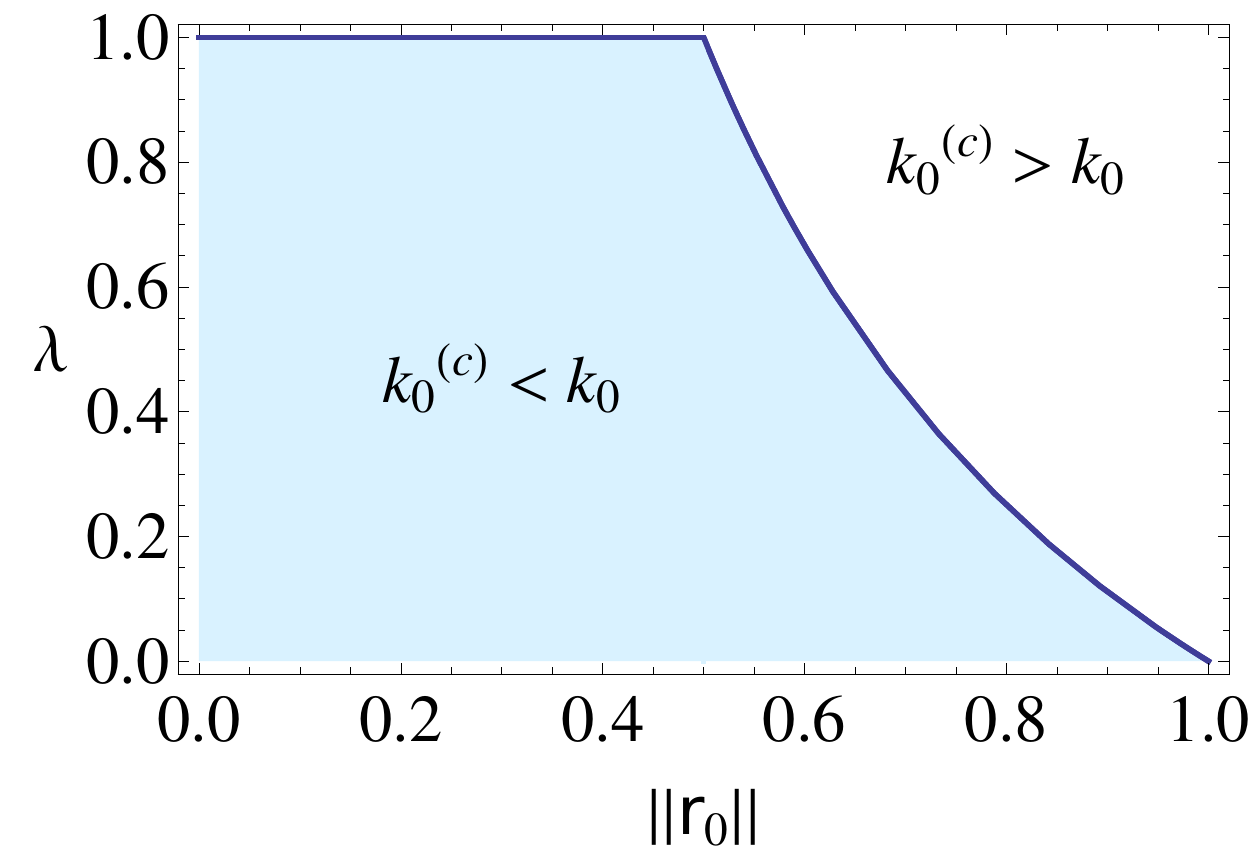}
\caption{(Color online) Parameter range defining the different regimes of qubit dilution. In the shaded region below the boundary curve $\lambda = \tilde\lambda(\|\textbf{r}_0\|)$ [\eq{polybart}], the threshold $k_0^{(c)}$ for a nonzero classical Gaussian risk is smaller than the corresponding threshold $k^{{\rm amp}}_0$ for a nonzero quantum Gaussian risk, and Cases 3 and 4 of Theorem \ref{theorem32} apply for determining the optimal FoM for the qubit problem. Above the boundary curve, the situation is reversed, and Cases 2 and 4 of Theorem \ref{theorem32} apply instead.}
\label{figpoly}
\end{figure}

By composing the channels employed in steps 2--4 we obtain the overall channel
$$
Q_{n}: = S_{n} \circ (K^{\star}\otimes P^{\star})\circ T_n,
$$
which will be shown to be optimal. Recall that for $k\leq k_0$ [see \eq{koeqn}], the quantum component  $\Phi^{s_2}_{k\alpha}$ of the Gaussian target state can be prepared exactly. The same is true for the classical component when $k \leq k_0^{(c)}$, where
\begin{equation}\label{k0c}
k^{(c)}_0 = \sqrt{\frac{1-\lambda^2\|\textbf{r}_0\|^2}{1-\|\textbf{r}_0\|^2}}
\end{equation}
is obtained by substituting $V_1=1-\|\textbf{r}_0\|^2$, $V_2=1-\lambda^2\|\textbf{r}_0\|^2$ for the variances in (\ref{cr}).

This means that the total risk has different expressions over the following three intervals: it is zero when
$0<k\leq {\rm min}(k_0 ,k_0^{(c)}) $, it has one classical or quantum contribution for
${\rm min}(k_0 ,k_0^{(c)})< k \leq {\rm max}(k_0 ,k_0^{(c)} )$ and has both quantum and classical contributions for
$k> {\rm max}(k_0 ,k_0^{(c)} )$. For purification (corresponding to Gaussian attenuation), the ordering $0<k_0^{\rm att}<k^{(c)}<1$ always holds, so the middle interval has a quantum contribution. However, for dilution (corresponding to Gaussian amplification), the ordering of $k_0$ and $k^{(c)}_0$ depends on the parameters $\|\textbf{r}_0\|$ and $ \lambda$. In particular, we see the appearance of a boundary which demarcates the two separate regimes of dilution, each defined by whether classical or quantum contributions to the risk take place first (see Fig.~\ref{figpoly}). Namely, $k_0^{(c)}<k^{{\rm amp}}_0$ for
\begin{equation}\label{polybart}
\lambda < \tilde\lambda(\|\textbf{r}_0\|) \equiv \min\left\{1,\,\frac{1-\|\textbf{r}_0\|}{\|\textbf{r}_0\|}\right\}\,,
\end{equation}
and $k_0^{(c)} \ge k^{{\rm amp}}_0$ otherwise. Notice that  inequality (\ref{polybart}) is always satisfied for $\|\textbf{r}_0\|\leq1/2$ for all values of $\lambda$.

\begin{figure}[t!]
\includegraphics[width=8.5cm]{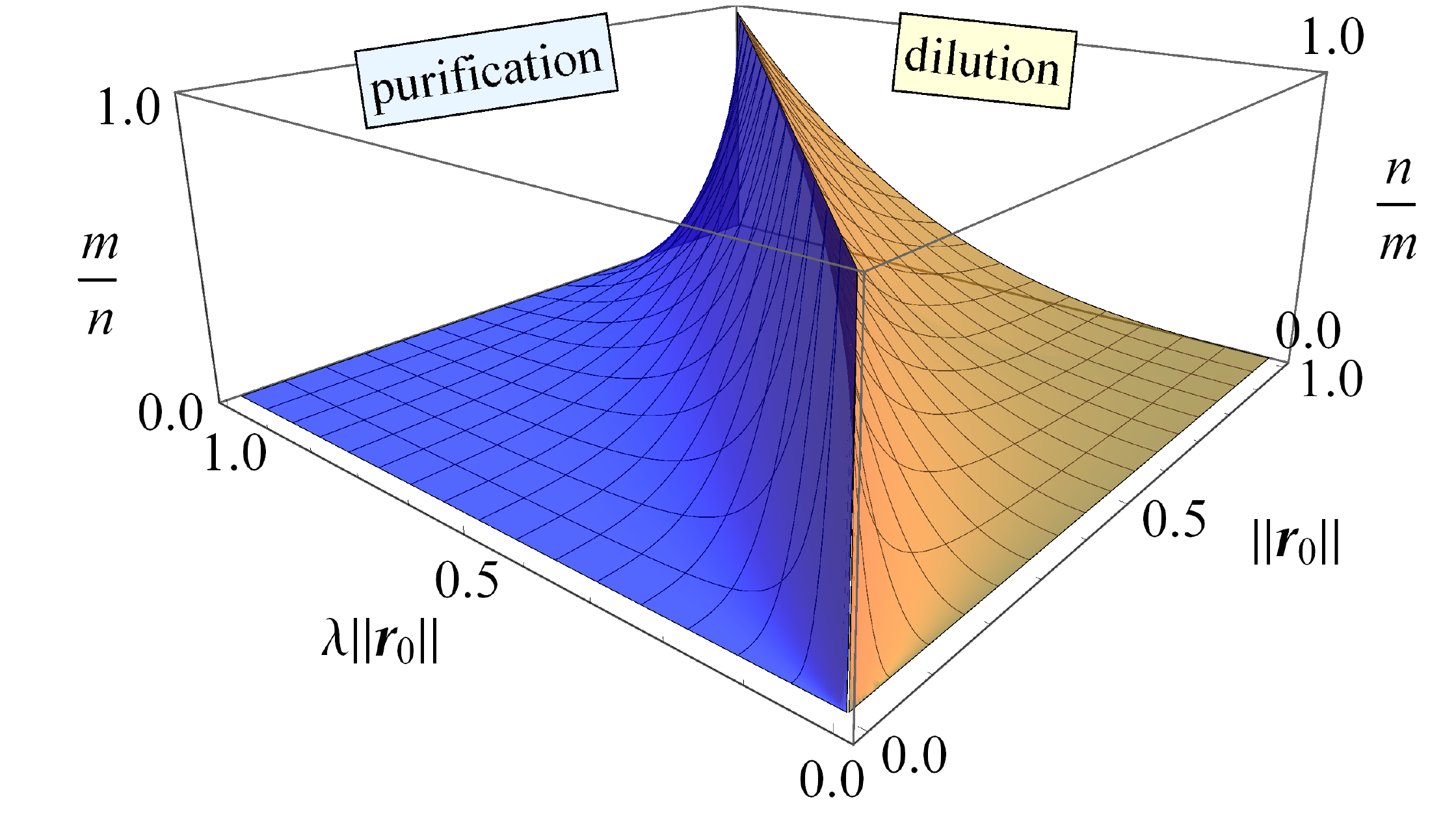}
\caption{(Color online) Optimal input ($n$) vs output ($m$) rates $\Lambda_0$ of qubit production for the processes of perfect purification and dilution, [Eq.~(\ref{eqrate})], plotted versus the Bloch vector lengths before ($\|\textbf{r}_0\|$) and after ($\lambda \|\textbf{r}_0\|$) the protocols. The left (right) side of the three-dimensional surface, corresponding to the region $\lambda > 1$ ($\lambda <1$), represents the optimal rate $m/n$ ($n/m$) for mixed qubit purification (dilution).}
\label{figrates}
\end{figure}

The relation between the output qubit rate $\Lambda=m/n$ and the constants $\lambda,k$ can be inferred from the geometry of
the Bloch sphere (see Fig.~\ref{figbloch})
\begin{equation}
\Lambda(\|\textbf{r}_0\|,\lambda,k)=\frac{k^2}{\lambda^2}.
\label{bsg}
\end{equation}
In particular, the maximum output rates for which the asymptotic risk is zero are obtained by setting: $k=k_0^{\rm att}$ for purification, and $$k=\left\{
      \begin{array}{ll}
        k_0^{(c)}, & \lambda<\tilde\lambda(\|\textbf{r}_0\|)\,, \\
        k_0^{\rm amp}, & \hbox{otherwise,}
      \end{array}
    \right.
$$
for dilution.

\begin{figure*}[tb]
\subfigure[\label{figpurclass}]
{\includegraphics[width=5.5cm]{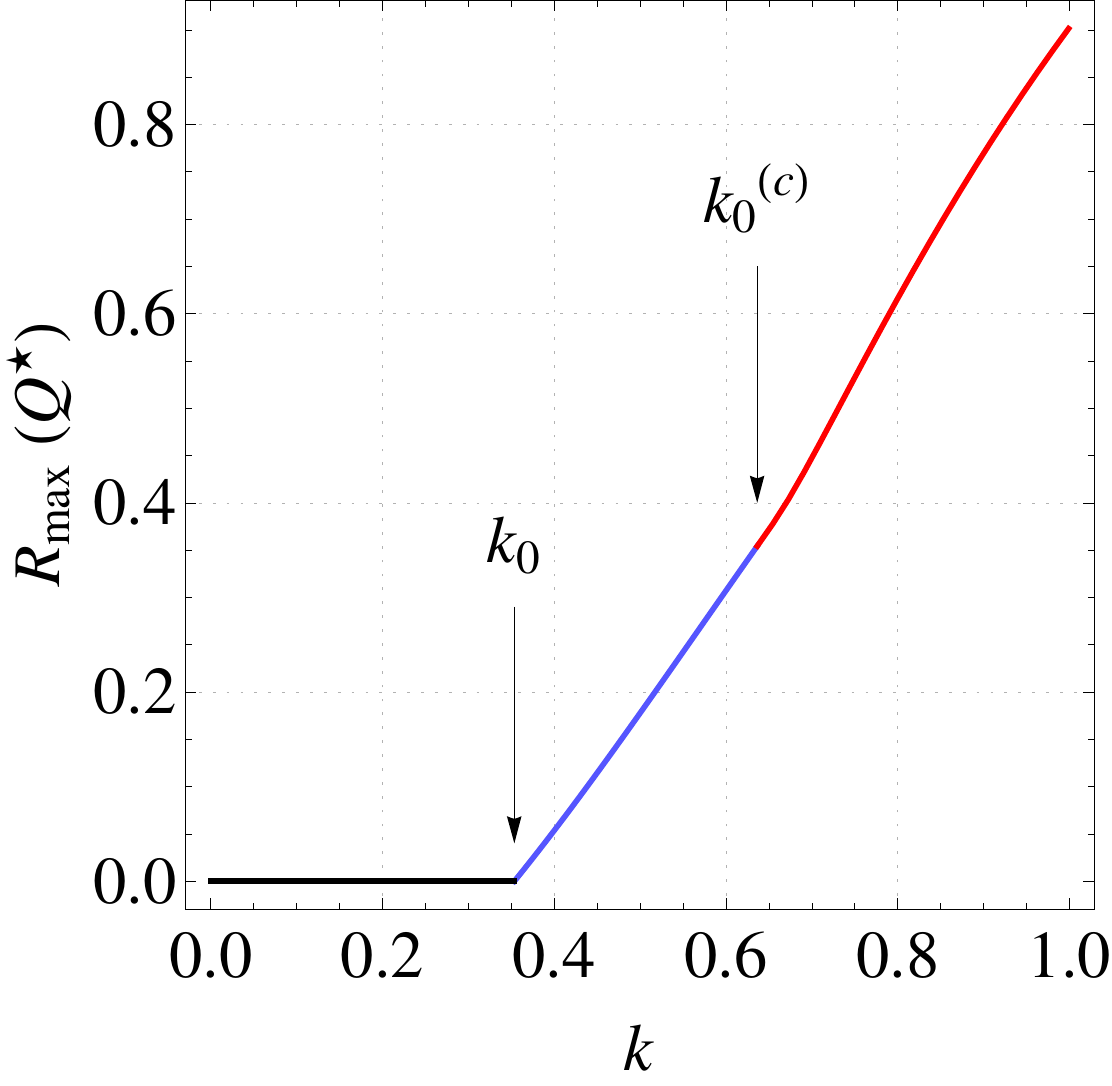}} \hspace{5mm}
\subfigure[\label{figampclass}]
{\includegraphics[width=5.5cm]{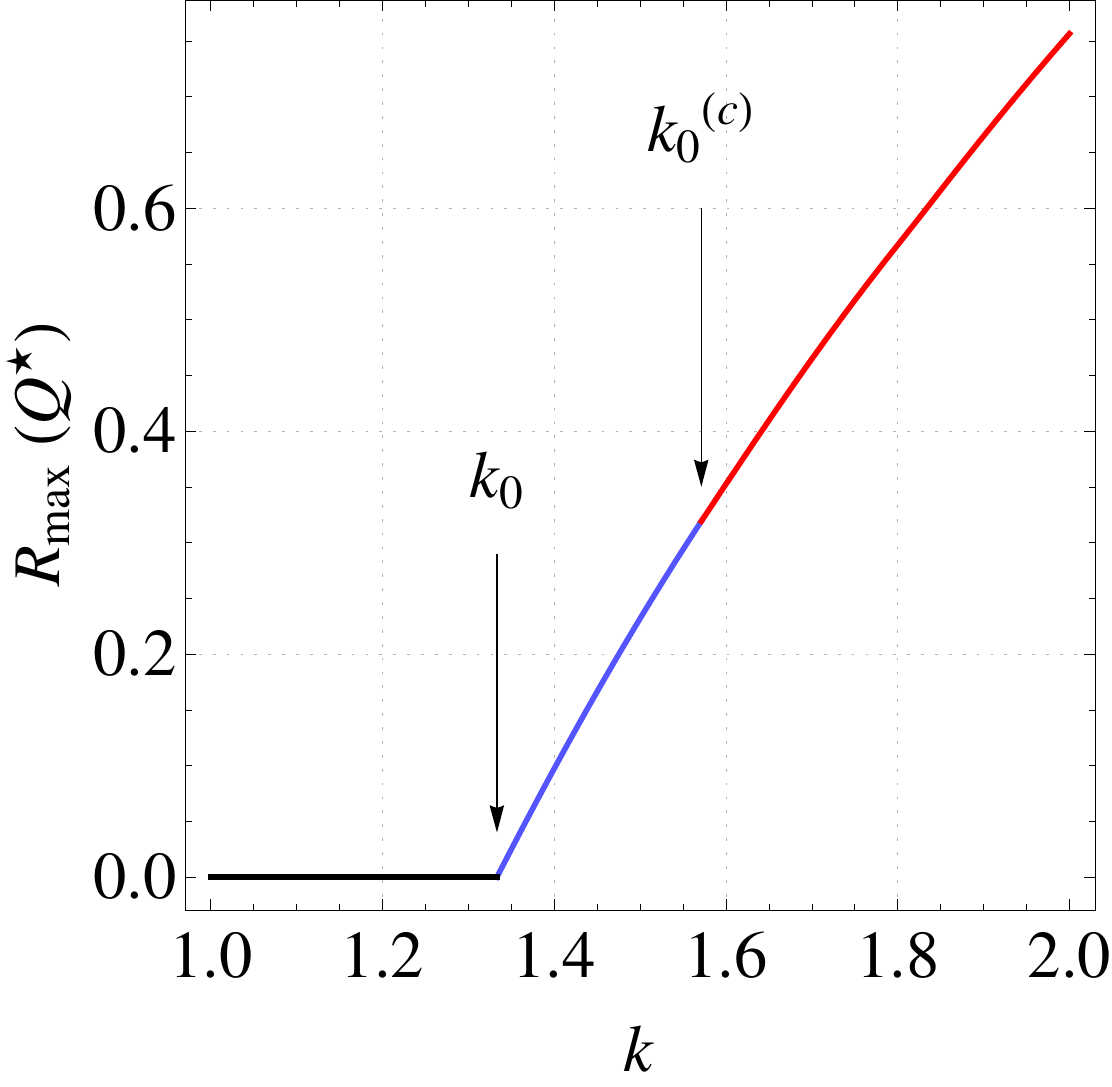}} \hspace{5mm}
\subfigure[\label{figampclass2}]
{\includegraphics[width=5.5cm]{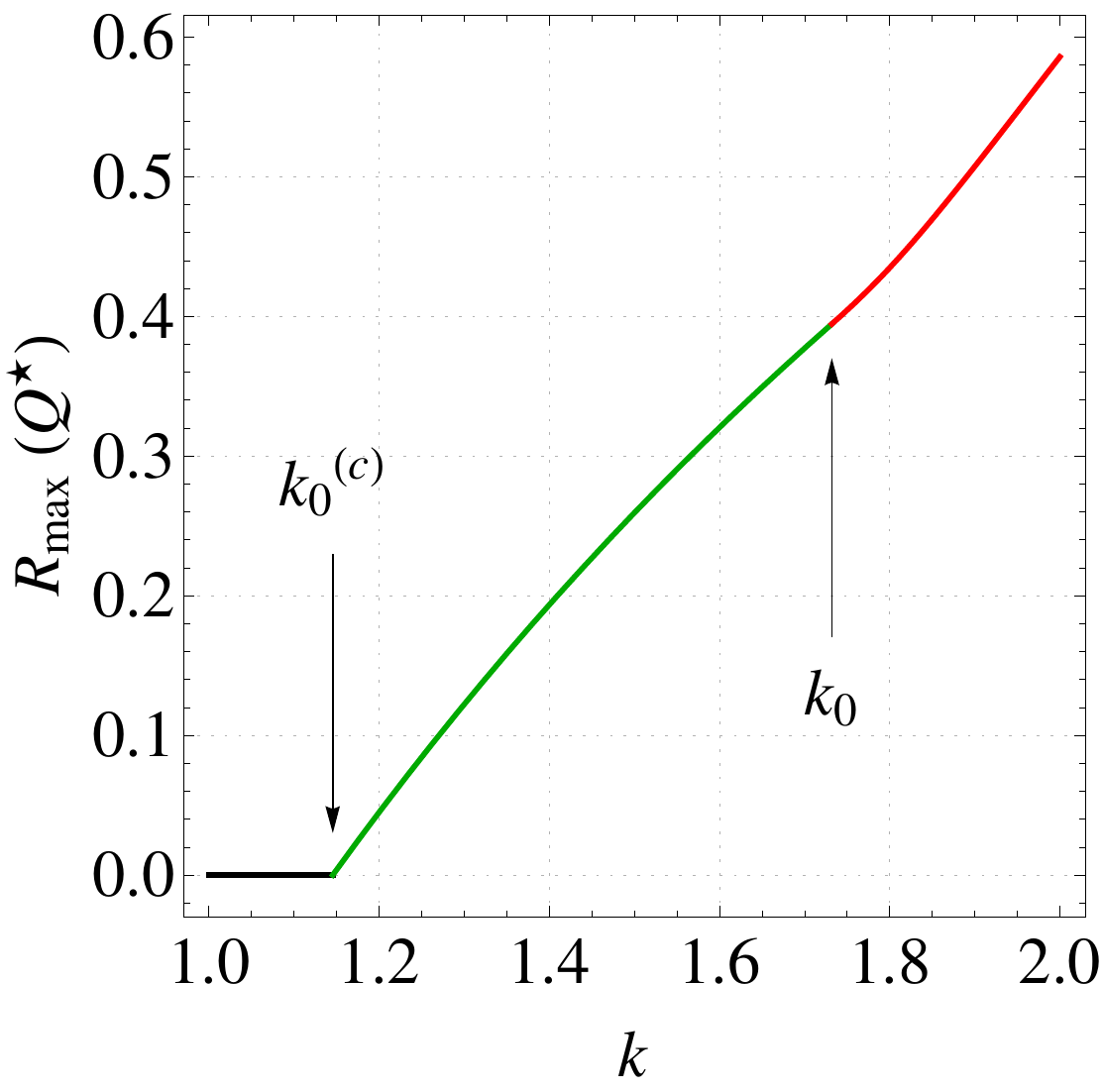}}
\caption{(Color online)
Plot of the minimax risk ${R}_{\max}(Q^\star)$ [Eqs.~(\ref{riskqubeq}--\ref{riskqubeqcl})] for (a) optimal purification with $\|\textbf{r}_0\|=1/3,  \lambda\|\textbf{r}_0\|=4/5$,  (b) optimal dilution with $\|\textbf{r}_0\|=4/5,  \lambda\|\textbf{r}_0\|=1/3$, and (c) optimal dilution with $\|\textbf{r}_0\|=1/2,  \lambda\|\textbf{r}_0\|=1/8$,  of $n$ qubits, as a function of the local scale factor $k$. In panels (a) and (b), the risk is zero for $k<k_0$ (black line), it is given by the quantum Gaussian FoM for $k_0\leq k <k_0^{(c)}$ (blue line, corresponding to Case 2 of Theorem~\ref{theorem32}), and it is then given by the joint quantum and classical contributions for $k \geq k_0^{(c)}$ (red line, corresponding to Case 4 of Theorem~\ref{theorem32}). In panel (c), which describes a dilution characterised by a choice of parameters within the shaded region of Fig.~\ref{figpoly}, we have instead that the risk is zero for $k<k_0^{(c)}$ (black line), it is given by the classical Gaussian FoM for $k_0^{(c)}\leq k <k_0$ (green line, corresponding to Case 3 of Theorem~\ref{theorem32}), and it is then given by the joint quantum and classical contributions for $k \geq k_0$ (red line, corresponding to Case 4 of Theorem~\ref{theorem32}).}
  \label{figclass}
\end{figure*}

Explicitly,
\begin{eqnarray}
\label{eqrate}
\Lambda_0^{\rm pur}(\|\textbf{r}_0\|,\lambda) &=&  \frac{ \lambda^{-1} - \|\textbf{r}_0\|}{\lambda^2 (1- \|\textbf{r}_0\|)}\,; \nonumber \\ \\
\nonumber
\Lambda_0^{\rm dil}(\|\textbf{r}_0\|,\lambda) &=& \left\{
      \begin{array}{ll}
        \frac{\lambda^{-2}- \|\textbf{r}_0\|^2}{1- \|\textbf{r}_0\|^2}, & \ \ \lambda<\tilde\lambda(\|\textbf{r}_0\|)\,, \\ & \\
        \frac{\|\textbf{r}_0\| + 1/\lambda}{\lambda^2 (\|\textbf{r}_0\| + 1)}, & \ \ \hbox{otherwise.}
      \end{array}
    \right.
\end{eqnarray}
The optimal rates \eqref{eqrate} are plotted in Fig.~\ref{figrates}.

 We can now state main result of this section whose proof is given in Appendix A.

\begin{theorem}\label{theorem32}
The sequence of purification (dilution) maps
\begin{equation}
Q^{\star}_n := S_m\circ (K^{\star}\otimes P^{\star}) \circ T_n
\end{equation}
is locally asymptotically minimax.

%For purification (dilution), $P^{\star}$ represents the optimal Gaussian channel for attenuation (amplification) described in Lemma (\ref{lemma22}).
We distinguish four cases for the minimax risk.

\textbf{Case} $\mathbf{1}$: Zero risk. If $k< {\rm min}(k_0, k_0^{(c)})$ then the minimax risk is zero. The optimal output rate $\Lambda_0$ is given in \eqref{eqrate}.

\textbf{Case} $\mathbf{2}$: Quantum contribution.
If $k_0\leq k \leq k^{(c)}_0$, then the purification (dilution) minimax risk at
$\textbf{r}_0$ is equal to the risk of the optimal Gaussian attenuation  (amplification) scheme (\ref{Rmaxpur})
\begin{equation}\label{riskqubeq}
R_{\rm minmax}(\textbf{r}_0,\lambda,\Lambda)=R_{\rm minmax}(k,s_1,s_2).
\end{equation}

\textbf{Case} $\mathbf{3}$: Classical contribution. If $k^{(c)}_0\leq k\leq k_0$, then the dilution minimax risk at $\textbf{r}_0$ is equal to the risk of the optimal classical Gaussian amplification scheme (\ref{clrisk}):
\begin{equation}
R_{\rm minmax}(\textbf{r}_0,\lambda,\Lambda)=R_{\rm minmax}(V_{1},V_{2},k).
\end{equation}

\textbf{Case} $\mathbf{4}$: Classical and quantum contributions. If $k> {\rm max}(k_0,k^{(c)}_0)$, then the purification (dilution) minimax risk is
\begin{align}\label{riskqubeqcl}
&R_{\rm minmax}(\textbf{r}_0,\lambda,\Lambda)\nonumber \\
&=\int{dx}\sum_n\Bigg|\frac{\exp\bigg(-\frac{x^2}{2k^2(1-\|\textbf{r}_0\|^2)}\bigg)(1-\tilde{s})\tilde{s}^n}{\sqrt{2\pi k^2 \big(1-\|\textbf{r}_0\|^2\big)}}\nonumber \\
&-\frac{\exp\bigg(-\frac{x^2}{2(1-\lambda^2\|\textbf{r}_0\|^2)}\bigg)(1-s_2)s^n_2}{\sqrt{2\pi \big(1-\lambda^2\|\textbf{r}_0\|^2\big)}}\Bigg|
\end{align}
where $\tilde{s}$ takes the values given in (\ref{tildes}) in the case of attenuation (for qubit purification) and  amplification (for qubit dilution) respectively.

\end{theorem}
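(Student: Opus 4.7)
The plan is to reduce the four-case qubit statement to the corresponding Gaussian results of Section~\ref{sec2} by two applications of local asymptotic normality, one in each direction. For achievability, I would bound the risk of $Q^\star_n$ by inserting the Gaussian reference states and invoking the triangle inequality together with contractivity of the trace norm under $K^\star\otimes P^\star$ and $S_m$:
\begin{equation*}
\|Q^\star_n(\rho^n_{\textbf{u}}) - \rho^m_{\textbf{u}'}\|_1 \leq \varepsilon^{\rm in}_n(\textbf{u}) + R^{\rm G}_n(\textbf{u}) + \varepsilon^{\rm out}_n(\textbf{u}'),
\end{equation*}
where $\varepsilon^{\rm in}_n(\textbf{u}) := \|T_n(\rho^n_{\textbf{u}}) - N_{\textbf{u}}\otimes\Phi_{\textbf{u}}\|_1$, the term $\varepsilon^{\rm out}_n$ is the analogous $S_m$-error at the output, and $R^{\rm G}_n(\textbf{u})$ is the Gaussian risk $\|(K^\star\otimes P^\star)(N_{\textbf{u}}\otimes\Phi_{\textbf{u}})-N_{\textbf{u}'}\otimes\Phi_{\textbf{u}'}\|_1$. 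By LAN both error terms vanish uniformly on $\|\textbf{u}\|\leq n^\epsilon$, so the asymptotic local maximum risk of $Q^\star_n$ is bounded above by the Gaussian risk.

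To evaluate $R^{\rm G}_n$ in each of the four cases, I would exploit that $K^\star$ and $P^\star$ are covariant under translations (displacements), so the risk depends on the local parameter only through a shift and reduces to the zero-displacement calculation, in which the classical marginal becomes a centred Gaussian and the quantum marginal a centred thermal state. In Cases~1, 2, and 3, the attainability conditions $k\leq\min(k_0,k_0^{(c)})$, $k\leq k_0^{(c)}$, or $k\leq k_0$ allow the classical, the quantum, or both components respectively to be matched perfectly after adding extra Gaussian or thermal noise, so the joint risk reduces directly to Lemma~\ref{lemma22} or to~\eqref{clrisk}. In Case~4 neither component matches exactly, but after covariance reduction both $N_{\textbf{u}'}\otimes\Phi_{\textbf{u}'}$ and $(K^\star\otimes P^\star)(N_{\textbf{u}}\otimes\Phi_{\textbf{u}})$ are block-diagonal in the product of the position basis for $\mathbb{R}$ and the Fock basis, so their trace norm collapses to the explicit integral-sum~\eqref{riskqubeqcl}.

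For the matching converse, I would take an arbitrary sequence $\{Q_n\}$ of qubit channels and associate to it the Gaussian channel $\tilde P_n := T_m\circ Q_n\circ S_n$. Two triangle-inequality steps, again combined with trace-norm contractivity and LAN, yield
\begin{equation*}
\|\tilde P_n(N_{\textbf{u}}\otimes\Phi_{\textbf{u}})-N_{\textbf{u}'}\otimes\Phi_{\textbf{u}'}\|_1 \leq \|Q_n(\rho^n_{\textbf{u}})-\rho^m_{\textbf{u}'}\|_1 + o(1)
\end{equation*}
uniformly in $\|\textbf{u}\|\leq n^\epsilon$. Hence the Gaussian minimax risk lower-bounds the qubit asymptotic minimax risk, and by Theorem~\ref{theorem21} together with the classical discussion of Section~\ref{sec2a}, this lower bound coincides with the quantity achieved by $Q^\star_n$, so the procedure is locally asymptotically minimax.

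The step I expect to be hardest is the Case~4 evaluation: one must check that the trace norm between a product of a Gaussian classical marginal and a centred thermal state against the corresponding target is genuinely given by the explicit integral-sum~\eqref{riskqubeqcl}, rather than merely by the additive upper bound coming from the naive product-state triangle inequality. This requires first using covariance to eliminate the displacement, and then exploiting simultaneous diagonalisation of the two classical-quantum states in a joint position-Fock basis so that the trace norm becomes an elementwise $L_1$ difference.
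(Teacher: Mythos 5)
Your proposal is correct and follows essentially the same route as the paper: LAN to transfer the problem to the Gaussian model, contractivity of the trace norm plus the triangle inequality for the achievability bound, covariance reduction and Theorem~\ref{theorem21}/Lemma~\ref{lemma22} (and the classical result of Section~\ref{sec2a}) to evaluate the risk in each case, and the reverse LAN map $T_m\circ Q_n\circ S_n$ for the converse, which the paper phrases as a contradiction argument rather than a direct lower bound. Your explicit flagging of the Case~4 subtlety (that the joint trace norm collapses to the elementwise integral--sum because both classical--quantum states are simultaneously diagonal in the position--Fock product) is a point the paper treats only implicitly, but it does not change the argument.
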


The optimal minimax risk for purification and dilution of qubits is plotted in Fig.~\ref{figclass} as a function of $k$.

\section{Conclusions}\label{sec4}

We have solved the practically relevant problem of optimal attenuation and amplification of displaced Gaussian states, with respect to the maximum norm-one distance FoM. As expected, the optimal channels are implemented by the beamsplitter and parametric amplifier respectively, where the ancillary state is provided by the vacuum in both cases. This solution was then used in conjunction with LAN, to construct optimal purification and dilution channels for ensembles
of i.i.d. qubits  as formulated in Theorem \ref{theorem32}.

In the Gaussian case, we give an explicit expression of the FoM as a function
of the variance parameters $s_{1}$ and $s_{2}$ of input and output states and the attenuation (amplification) factor $k$. In particular we identify the optimal value $k_{0}(s_{1},s_{2})$ for which the protocol achieves the target state exactly. Similarly, in the multiple qubits case, we derive the FoM as a function of the input and output purity and the asymptotic input/output rate $\Lambda$,
and identify the optimal rate $\Lambda_{0}$ for which the the protocol achieves the target collective state exactly. Both classical and quantum Gaussian contributions to the risk have to be taken into account to calculate the maximum rates, and to provide the optimal FoM for purification and dilution of multiple qubits, in the parameter range where the procedures cannot be accomplished perfectly.

An interesting future project is to extend the techniques used in this paper to tackle the general problem of asymptotically optimal channel inversion for arbitrary channels on finite dimensional systems. Such a study may provide efficient strategies to counteract the effect of various types of noise and decoherence processes, beyond the depolarising channel considered in the present work.

\acknowledgments{
M\u{a}d\u{a}lin Gu\c{t}\u{a} was supported by the EPSRC Fellowship EP/E052290/1.}

\appendix
\section{Proofs}

\subsection*{Proof of Theorem \ref{theorem21}}
\begin{proof}
As the proof follows the lines of similar results in \cite{GM,GBA} we will briefly sketch the main ideas. A covariance argument \cite{C,O} shows that one may restrict the attention to channels which are displacement-covariant, in the sense that $P(W_\xi \Phi W^\dagger_\xi) = W_\xi P(\Phi)W^\dagger_\xi$ for any input state $\rho$. For such channels the risk is independent of $\alpha$ and
$$
{R}_{\max}(k,s_1,s_2,P) = \|P(\Phi^{s_1}) - \Phi^{s_2}\|_1
$$
In the case of attenuation $(k<1)$ such channels are described by the linear transformation
$$
c_{\rm att} = ka + \sqrt{1-k^2}b
$$
where $a$ is the input mode, $c_{\rm att}$ is the output and  $b$ is an ancillary mode prepared in a state $\tau$. Since the channel is completely characterised by the state $\tau$, we will denote it by $P_\tau$. Similarly, for amplification
$(k>1)$ the output of the channel $P_\tau$ is the mode
$$
c_{\rm amp} = ka + \sqrt{k^2-1}b^\dagger,
$$ with $b$ prepared in the state $\tau$. By a second covariance argument with respect to phase rotations, and taking into account that $\Phi$ is invariant under phase  rotations, we obtain that $\tau$ can be taken to be phase-invariant, i.e. it is a mixture of Fock states $\tau=\sum_i\tau_i|i\rangle\langle i|$. In this case the output state will be diagonal in the Fock basis and we write $P_\tau(\Phi^{s_1}) = \sum_ip^\tau_i|i\rangle\langle i|$, and in particular $p^\omega$ corresponds to the output state when the ancilla is the vacuum. Similarly, we denote the coefficients of the Gaussian state $\Phi^{s_1}$ and $\Phi^{s_2}$ by $p_i=(1-s_1)s^i_1$ and $q_i=(1-s_2)s^i_2$. The proof reduces now to showing that, for any $\tau$,
\begin{equation}\label{eq.norm.ineq}
\|p^\tau - q\|_1 \geq \|p^\omega - q\|_1.
\end{equation}
The key to proving this statement is the concept of stochastic ordering, whose definition we recall:
\begin{definition}[Stochastic Ordering]
Let $p=\{p_l:l\in\mathbb{N}\}$ and $q = \{q_l: l\in\mathbb{N}\}$ be two probability distributions over $\mathbb{N}$. We say that $p$ is
stochastically smaller than $q$ $(p\preceq q)$ if
\begin{equation}
\sum^m_{l=0} p_l \geq \sum^m_{l=0} q_l, \quad \forall m \geq 0
\end{equation}
\end{definition}
The following lemma holds for both  the purification and the amplification scenarios:
\begin{lemma}\label{lemma42}
For any state $\tau$ the following stochastic ordering holds
\begin{equation}
p^{\omega} \preceq p^{\tau}.
\label{vacstoch}
\end{equation}
\end{lemma}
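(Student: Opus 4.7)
The plan is to reduce the statement to a one-step monotonicity claim for Fock-state ancillas, and then verify that claim via the explicit matrix elements of the beamsplitter (for attenuation) or two-mode squeezer (for amplification).

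\textbf{Reduction to Fock ancillas.} Since $\tau$ has already been assumed phase invariant we may write $\tau = \sum_{i\geq 0}\tau_i |i\rangle\langle i|$, and by linearity of the channel and of the diagonal projection, $p^\tau = \sum_i \tau_i p^{(i)}$, where $p^{(i)}$ denotes the output distribution when the ancilla is prepared in the Fock state $|i\rangle$, so $p^{(0)}=p^\omega$. Stochastic ordering is preserved by convex combinations: multiplying the inequality $\sum_{l=0}^m p^{(0)}_l \geq \sum_{l=0}^m p^{(i)}_l$ by $\tau_i\geq 0$ and summing over $i$ yields the desired $\sum_{l=0}^m p^{(0)}_l \geq \sum_{l=0}^m p^{\tau}_l$. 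It therefore suffices to show $p^{(0)}\preceq p^{(i)}$ for every $i\geq 1$, and by transitivity of $\preceq$ it is enough to prove the one-step inequality $p^{(i)}\preceq p^{(i+1)}$ for all $i\geq 0$.

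\textbf{Recursion via a raising operator.} Using the identity $|i+1\rangle\langle i+1|_b = (i+1)^{-1}\, b^\dagger |i\rangle\langle i|_b\, b$ and conjugating by the beamsplitter (respectively parametric amplifier) unitary $U$, the output state with ancilla $|i+1\rangle$ becomes
\begin{equation*}
\frac{1}{i+1}\,(UbU^\dagger)^\dagger\, \bigl(U(\Phi^{s_1}\otimes|i\rangle\langle i|)U^\dagger\bigr)\, (UbU^\dagger),
\end{equation*}
where $UbU^\dagger$ is a linear combination of $a$ and $b$ for the attenuation channel and of $a^\dagger$ and $b$ for the amplification channel. Taking the partial trace over the discarded mode and using phase invariance of $\Phi^{s_1}$ suppresses the off-diagonal interference terms, leaving an explicit recursion that expresses each probability $p^{(i+1)}_q$ as a positive linear combination of $\{p^{(i)}_{q'}\}_{q'\geq 0}$ determined by the squared Fock-basis matrix elements of $U$.

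\textbf{The one-step partial-sum inequality and the main obstacle.} What remains is to verify that $\sum_{q=0}^m p^{(i)}_q \geq \sum_{q=0}^m p^{(i+1)}_q$ for every $m\geq 0$. This is the heart of the proof, and also the main obstacle: the matrix elements of the beamsplitter and the two-mode squeezer on Fock states are given by squared Jacobi polynomials and exhibit genuine quantum interference (e.g.\ the Hong--Ou--Mandel cancellation), so the inequality cannot be reduced to a classical photon-thinning coupling. My strategy would be to fix $m$, substitute the recursion from the previous step, and reorganise the resulting expression as a telescoping combination of CDF differences whose coefficients are manifestly non-negative by virtue of the Jacobi recurrence; alternatively, one can invoke the covariant-channel majorisation techniques of \cite{GM,GBA} to obtain the same bound without having to unpack the matrix elements explicitly. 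The amplification case is handled identically after replacing $b$ by the appropriate creation-operator combination in the Heisenberg-picture transformation.
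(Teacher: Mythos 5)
Your first step---decomposing $\tau=\sum_i\tau_i|i\rangle\langle i|$ and using the fact that stochastic ordering survives convex combinations to reduce to Fock ancillas---is correct and coincides with the paper's argument. Everything after that, however, contains genuine gaps. First, the ``recursion via a raising operator'' does not work as described: conjugating $|i+1\rangle\langle i+1|_b=(i+1)^{-1}b^\dagger|i\rangle\langle i|_b\,b$ by $U$ produces, besides the diagonal terms, cross terms whose trace against $\sigma:=U(\Phi^{s_1}\otimes|i\rangle\langle i|)U^\dagger$ involves matrix elements $\langle q+1,j|\sigma|q,j+1\rangle$. These connect states of \emph{equal} total photon number; since $\sigma$ is invariant only under the diagonal phase rotation $e^{i\theta(N_a+N_b)}$, such elements are generically nonzero (already for input $|1\rangle_a\otimes|0\rangle_b$), so phase invariance of $\Phi^{s_1}$ does not suppress the interference. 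Moreover, even the surviving $b^\dagger\sigma b$ contribution depends on the joint photon-number distribution of the two output modes, not on the marginal $p^{(i)}$ alone, so $p^{(i+1)}_q$ is not a positive combination of the $p^{(i)}_{q'}$. Second, and decisively, the partial-sum inequality that you yourself identify as ``the heart of the proof'' is never established: a telescoping rearrangement ``manifestly non-negative by virtue of the Jacobi recurrence'' is asserted, not exhibited, and the fallback appeal to the covariant-channel techniques of \cite{GM,GBA} is circular here, since those techniques deliver the reduction to \eqref{eq.norm.ineq} but not the stochastic ordering itself, which is exactly what this lemma must supply.

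The idea you are missing is the one that makes the Jacobi polynomials and the interference disappear altogether. Purify the thermal input by writing $a=\cosh(t)a_1+\sinh(t)a_2^\dagger$ with $a_1,a_2$ fictitious vacuum modes and $\tanh^2 t=s_1$; then the output mode $c=ka+\sqrt{1-k^2}\,b$ regroups as $c=\cosh(\tilde t)\tilde b+\sinh(\tilde t)\tilde a^\dagger$ with $\sinh\tilde t=k\sinh t$ and $\tilde b=Ta_1+Rb$ a beamsplitter combination of $a_1$ and $b$. The crucial point is that $\tilde b$ is in the vacuum iff $b$ is, so the problem becomes one with vacuum signal and an arbitrary diagonal effective ancilla state $\tilde\tau$; after the same convexity reduction, the output distribution for $\tilde\tau=|k\rangle\langle k|$ is an explicit shifted negative binomial $d^{(k)}_l\propto\gamma^l\binom{l+k}{k}$, and comparing its partial sums with the geometric $d^{(0)}_l=(1-\gamma)\gamma^l$ is an elementary estimate (the amplification case is handled the same way with $\cosh\tilde t=k\cosh t$). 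Without this purification-and-regrouping step, or a genuine proof of your one-step inequality, the lemma remains unproved.
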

\begin{proof}
We treat the attenuation and amplification separately but the idea is the same in both cases: we reduce the statement about stochastic ordering to a simpler one where the input mode is in the vacuum.

\noindent
{\it Attenuation.} We write the input mode as $a = \cosh(t)a_1 + \sinh(t)a^\dagger_2$ with $a_{1,2}$ two fictitious modes in the vacuum state, and $\tanh^2(t) := s_{1}$, which ensures that the state of $a$ is $\Phi^{s_{1}}$.
Let $\tilde{t}$ be such that $\sinh(\tilde{t})=k\sinh(t)$ and denote
$$
T = \sqrt{1 - \frac{(1 - k^2)}{\cosh^2(\tilde{t})}}, \qquad
R = \frac{\sqrt{1 - k^2}}{\cosh(\tilde{t})}.
$$
Then $c_{pur}=ka+\sqrt{1-k^{2}}b $ can be written as
\begin{eqnarray*}
c_{pur}& =& \cosh(\tilde{t})(Ta_1 + Rb) + \sinh(\tilde{t})a^\dagger_2 \\
&=& \cosh(\tilde{t})\tilde{b} + \sinh(\tilde{t})\tilde{a}^\dagger
\end{eqnarray*}
where  $\tilde{b} := Ta_1 + Rb$ and $a_{2} $ was relabelled $\tilde{a}$.
%Physically, this means that the modes $a_1$, $b$ are mixed in a beamsplitter, then a different degenerate parametric amplifier is applied together with mode $a^\dagger_2$.
The state of the mode $\tilde{b}$ is given by
\begin{eqnarray*}
\tilde{\tau} &=& \sum^\infty_{k=0} \tau_k \sum^k_{p=0} \binom{k}{p} T^{2(p-k)} R^{2k} |p\rangle\langle p|\\
 & =& \sum^\infty_{p=0} \tilde{\tau}_p |p\rangle\langle p|
\end{eqnarray*}
so  $\tilde{b}$ is in the vacuum state if and only if $b$ is in the vacuum.
Thus it suffices to prove the stochastic ordering statement for the mode
$c_{\rm att}$ written as a combination of $\tilde{b}$ and $\tilde{a}$ for an arbitrary diagonal state $\tilde{\tau}$ of $\tilde{b}$ and $\tilde{a}$ in the vacuum. Furthermore, since stochastic ordering is preserved under convex combinations, it suffices to prove the statement for any {\it pure} diagonal state $\tilde{\tau} = |k\rangle\langle k|$, $k\neq 0$. In this case the state of $c_{\rm att}$ is given by
\begin{eqnarray*}
\rho^{\rm out}_{\rm att} &=& e^{-2g(k+1)}\sum^\infty_{l=0} \Gamma^{2l} \binom{l+k}{k} |l+k\rangle\langle l+k| \\
&:=& \sum^m_{l=0} d^{(k)}_l |l+k\rangle\langle l+k|
\end{eqnarray*}
where $\Gamma = \tanh({\tilde{t}})$ and $e^g = \cosh(\tilde{t})$.  The stochastic ordering now reduces to showing that $\sum^m_{l=0} d^{(0)}_l \geq \sum^m_{l=0} d^{(k)}_l$ for all $m$. With the notation $\gamma = \Gamma^2$, we get
\begin{eqnarray*}
\sum^{p+k}_{l=0}d^{(k)}_l &=& (1-\gamma)^{k+1} \sum^p_{l=0}\gamma^l\binom{l+k}{k}
\\ &\leq&  1 - \gamma^{p+1} \sum^k_{r=0} (1-\gamma)^r\gamma^{k-r}\binom{k}{r}
\\ &=& 1-\gamma^{p+1} = \sum^p_{l=0} d^{(0)}_l.
\end{eqnarray*}

{\it Amplification.} As before we write $a = \cosh(t)a_1 + \sinh(t)a^\dagger_2$ and define $\tilde{t}$ by
$\cosh({\tilde{t}}) = k\cosh(t)$ and the beamsplitter coeficients
$$
T = \sqrt{1 - \frac{(1 - k^2)}{\sinh^2(\tilde{t})}} \qquad
R = \frac{\sqrt{1 - k^2}}{\sinh(\tilde{t})}.
$$
The output mode is now
\begin{eqnarray*}
c_{\rm amp} &=& \sinh(\tilde{t})(Rb^\dagger + Ta^\dagger_2) + \cosh(\tilde{t})a_1 \\
&=& \sinh(\tilde{t})\tilde{b}^\dagger + \cosh(\tilde{t})\tilde{a}
\end{eqnarray*}
where we have relabelled $a_1$ by $\tilde{a}$ and introduced the
mode $\tilde{b}=Rb^\dagger + Ta^\dagger_2$. As before, the state of $\tilde{b}$ is the vacuum if and only if $b$ is in the vacuum state, so it suffices to verify the statement for the state $\tilde{\tau}=|k\rangle \langle k|$ in which case the output state is
$$
\rho^{\rm out}_{\rm amp} = e^{-2g(k+1)}\sum^\infty_{l=0} \Gamma^{2l} \binom{l+k}{k} |l\rangle\langle l| = \sum^m_{l=0} d^{(k)}_l|l\rangle\langle l|
$$
The relation $p^\omega \preceq p^\tau$ now follows from
\begin{eqnarray*}
\sum^{p}_{l=0}d^{(k)}_l &=& (1-\gamma)^{k+1} \sum^p_{l=0}\gamma^l\binom{l+k}{k} \\
&\leq& 1-\gamma^{p+1}= \sum^p_{l=0} d^{(0)}_l.
\end{eqnarray*}
This ends the proof of Lemma \ref{lemma42} for both cases.
\end{proof}

The following lemma completes  the proof of Theorem \ref{theorem21} by transforming the stochastic ordering into the desired norm inequality \eqref{eq.norm.ineq}. Its proof \cite{GM,GBA} uses the fact that
$q \preceq  p^{\omega} $ which is equivalent to the fact that $P^{\star}(\Phi^{s_{1}})$ is more noisy that $\Phi^{s_{2}}$. The latter is satisfied for $k\geq k_{0}$ as assumed in the theorem.
\begin{lemma}
Let $ p^{\prime}$ be a discrete probability distribution such that
$p^{\omega}\preceq p^{\prime}$. Then
\begin{equation}
\|p^\prime - q\|_1 \geq \|p^\omega - q\|_1.
\end{equation}
\end{lemma}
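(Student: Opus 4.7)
The strategy is to exploit the geometric form of both $p^\omega$ and $q$, derived in the preceding computation, to upgrade the hypothesized stochastic ordering into a full monotone likelihood ratio (MLR) between them. MLR guarantees that the total-variation distance $\tfrac{1}{2}\|p^\omega-q\|_1$ is attained on an \emph{upper set} $U\subset\mathbb{N}$, and on such a set the hypothesis $p^\omega\preceq p'$ is literally a tail comparison. The desired inequality then drops out of the variational definition of total variation in a single line.

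\textbf{Execution.} Recall that the preceding calculation gives $p^\omega_i=(1-\tilde{s})\tilde{s}^i$ and that $q_i=(1-s_2)s_2^i$, and that the hypothesis $k\geq k_0$ of Theorem~\ref{theorem21} is equivalent to $\tilde{s}\geq s_2$. This single inequality supplies \emph{both} the ordering $q\preceq p^\omega$ mentioned in the preamble and the monotone likelihood ratio $p^\omega_i/q_i \propto (\tilde{s}/s_2)^i$, nondecreasing in $i$. Let $m_0:=\min\{i\geq 0:\,p^\omega_i\geq q_i\}$ and $U:=\{i:\,i\geq m_0\}$. By MLR, $p^\omega_i-q_i$ is nonpositive for $i<m_0$ and nonnegative for $i\in U$, so using $\sum_i(p^\omega_i-q_i)=0$,
\begin{equation*}
\|p^\omega - q\|_1 \;=\; 2\sum_{i\in U}(p^\omega_i-q_i) \;=\; 2\bigl[p^\omega(U)-q(U)\bigr].
\end{equation*}
Because $U$ is an upper set, the assumption $p^\omega\preceq p'$ gives $p'(U)\geq p^\omega(U)$, and similarly $q\preceq p^\omega$ gives $p^\omega(U)\geq q(U)$. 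Applying the variational bound $\|p'-q\|_1\geq 2|p'(U)-q(U)|$ with the nonnegativity $p'(U)\geq q(U)$,
\begin{equation*}
\|p' - q\|_1 \;\geq\; 2\bigl[p'(U)-q(U)\bigr] \;\geq\; 2\bigl[p^\omega(U)-q(U)\bigr] \;=\; \|p^\omega-q\|_1,
\end{equation*}
which is the claim.

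\textbf{Main obstacle.} The substantive point is that the stochastic orderings $q\preceq p^\omega\preceq p'$ by themselves are \emph{not} enough to yield the conclusion: one can easily construct two-parameter families of discrete distributions on $\mathbb{N}$ satisfying both orderings but violating the inequality, because the optimal Neyman--Pearson set for the pair $(p^\omega,q)$ need not be monotone in $i$. What rescues the argument is that $p^\omega$ and $q$ are both geometric, so the same condition $\tilde{s}\geq s_2$ that encodes the ordering automatically promotes it to MLR, and the optimal set is forced to be an upper interval. Recognizing this structural input is the only real step; the rest is a three-line manipulation.
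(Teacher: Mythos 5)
Your proof is correct, and it is a complete, self-contained version of the argument the paper only cites (the paper defers to Refs.~\cite{GM,GBA}, noting merely that the key input is $q\preceq p^{\omega}$, equivalent to $k\geq k_{0}$, i.e.\ $\tilde{s}\geq s_{2}$). Your route is the standard one implicit in those references: since $p^{\omega}$ and $q$ are geometric, the set where $p^{\omega}\geq q$ is an upper interval, the trace-norm distance is attained on it, and the stochastic ordering $p^{\omega}\preceq p'$ then transfers directly. Your remark that the two stochastic orderings alone do not suffice is accurate (one can build discrete counterexamples when the positive set of $p^{\omega}-q$ fails to be an upper set), so identifying the monotone-likelihood-ratio structure of the geometric pair as the essential extra input is exactly the right emphasis.
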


% The optimal channel is found by minimising the norm $1$ risk over all possible $\tilde{\tau}$; the result of Lemma \ref{lemma42} can be used to prove the following lemma in the case where $q\preceq p^\omega$\footnote{This corresponds to the restriction on $k$ in theorem \ref{theorem21}. Of course, in general we can purify or amplify to states where $p^\omega\preceq q$ (from this we also cannot say $q\preceq p^\tau$, in general); such scenarios are of no interest in this paper.}which yields the desired result for both purification and amplification:
%\begin{lemma}
%The following inequality holds for any $\tau$
%\begin{equation}
%\|p^\tau - q\|_1 \geq \|p^\omega - q\|_1
%\end{equation}
%\end{lemma}
%
%We have found the optimal channel $P^*$ for which $R_{\min\max}(P_\tau) = \inf_{\tau}\|\Phi^{T_2} - P_\tau(\Phi^{T_1})\|_1$ is achieved; this map $P^*(\Phi^{s_1}_\alpha)\mapsto\Phi^{\tilde{s}}_{k\alpha}$ maps the input state to the output thermal state $\Phi^{\tilde{s}}_{k\alpha}$ where $\tilde{s}=1-e^{-2g}$.
\end{proof}

\subsection*{Proof of Lemma \ref{lemma22}}
We use the notations introduced in the proof of Theorem~\ref{theorem21}. By expressing the quadrature variance of the input mode $a$ in terms of $t$ and $s_1$ we obtain $\sinh^2t = \frac{1}{e^{s_1} - 1}$. According to Theorem \ref{theorem21} the output state of the optimal channel
$P^{\star}$ is the Gaussian state
% We know thatoptimality occurs for $\tau = \omega$, therefore setting $k=0$ in
%$\rho^{out}_{pur}$ and $\rho^{out}_{amp}$
%gives a class of output (thermal) states
$$
P^{\star}( \Phi^{s_{1}})= \Phi^{\tilde{s}} =
e^{-2g}\sum^\infty_{l=0}(1-e^{-2g})^l|l\rangle\langle l|
$$
with $g$ taking different values in the attenuation and amplification cases. For the geometric distributions $p^{\omega}$ and $q$ we have
$$
\| \Phi^{\tilde{s}}- \Phi^{s_{2}} \|_{1}= \|p^{\omega}-q\|_{1} =
2(\tilde{s}^{m_{0}+1} - s_{2}^{m_{0}+1})
$$
where $m_{0}$ is the largest integer such that $p^{\omega}_{m_{0}} \leq q_{m_{0}}$, more precisely
$$
m_{0}= \lfloor \ln [ (1-\tilde{s})/(1-s_{2})] /\ln(s_{2}/\tilde{s}) \rfloor
$$
It remains to compute the concrete expressions of $\tilde{s}$ and implicitly of  $m_{0}$ for the attenuation and amplification cases. For attenuation, making use of $\sinh\tilde{t} = k\sinh t$, we find
$$
\tilde{s}_{pur} = \frac{s_1 k^2}{1-s_1 + s_1 k^2}.
$$
For amplification, we use $\cosh\tilde{t} = k\cosh t$ and find
$$
\tilde{s}_{amp}= 1-\frac{1-s_1}{k^2}.
$$
%The amplified state has distribution
%$$
%p^\omega_i = \frac{1-s_1}{k^2}\Big[1-\Big(\frac{1-s_1}{k^2}\Big)\Big]^i
%$$
%where
%$$
%i = \frac{\log\big(k^2 (1-s_2)\big) - \log(1-s_1)}{\log(1-\frac{1-s_1}{k^2}) - \log(s_2)}
%$$
%The optimal amplification risk is then
%$$
%\|p^\omega-q\|_1 = 2\Big(1 - \frac{1 - s_1}{k^2}\Big)^{m_0 +1} - 2s^{m_0 + 1}_2
%$$
%
%

\subsection*{Proof of Theorem \ref{theorem32}}
\begin{proof}
We want to show that $Q^\star:=S_n\circ P^\star \circ T_n$ is the optimal purification or dilution procedure for $n$ i.i.d. qubits. The idea is that, by using LAN, we can show the qubit and Gaussian statistical problems to be equivalent, the Gaussian one (respectively for attenuation and amplification) being solved in Section~\ref{sec2b}, which then allows us to recast the qubit problem in the Gaussian setup with a vanishing difference in the risks. We will consider the four separate cases: zero risk, solely quantum contribution, solely classical contribution, and both classical and quantum contributions. We will then use the Gaussian solution to show that $R_{\max}(\textbf{r}_0,Q^\star,\lambda)$ is less than or equal to the corresponding optimal Gaussian risk, then show that a strict inequality violates the optimality of this optimal solution. We begin by restricting $\textbf{r}$ to the local neighbourhood $\|\textbf{r}-\textbf{r}_0\|\leq n^{-\frac{1}{2}+\epsilon}$. This probability that the state fails to be in this region is $o(1)$ and has no influence on the asymptotic risk (see Lemma 2.1. in \cite{GJK}). We are now able to apply LAN, which maps input states $\rho^{\otimes n}_{\textbf{r}}$ close to some Gaussian state, say $\tilde{\Phi}_{\textbf{u}}$, via the channel $T_n$. We now consider the individual cases, which are each slight variations on the same proof:

\textbf{Case}$\mathbf{1}$: $k<{\rm min}(k_0,k_0^{(c)})$. In this case both classical and quantum Gaussian channels have zero risk, so the asymptotic qubit risk is zero.

\textbf{Case} $\mathbf{2}$: $k_0\leq k\leq k^{(c)}_0$. In this instance, the risk receives only a quantum contribution.
%for a given region defined by (\ref{cr}), which includes the point $k_0$, the classical components arising from the LAN mapping contributes nothing to the risk and can be disregarded. Since this is the only point we are really interested in, this is what we do in the following analysis. However, since Theorem \ref{theorem21} holds in a more general sense for all $k\geq k_0$, the exact form of the risk will receive classical contributions as $k$ increases past equality in (\ref{cr}), and a more complicated analysis than the one below will be required for points in this region.
Using contractivity of the CP maps $S_m$ and $P^\star$, the LAN convergence, and the fact that in this regime
$K^\star N_{\bf u} = N_{{\bf u}^\prime}$ we obtain
\begin{align}
&R(Q^\star_n, \textbf{r},\lambda)\nonumber\\
&\qquad =\|\rho^m_{\textbf{u}'}-Q^\star_n(\rho^n_{\textbf{u}})\|_1\nonumber \\
&\qquad\leq\|\rho^m_{\textbf{u}^\prime}-S_m(\Phi^{s_2}_{\textbf{u}^\prime} \otimes N_{{\bf u}^\prime})\|_1\nonumber \\
&\qquad ~+\|S_m(\Phi^{s_2}_{\textbf{u}'} \otimes N_{{\bf u}^\prime}) )-S_m\big(P^\star \otimes K^\star \big(T_n(\rho^n_{\textbf{u}})\big)\big)\|_1\nonumber \\
&\qquad\leq\|\Phi^{s_2}_{\textbf{u}^\prime} \otimes N_{{\bf u}^\prime} -P^\star \otimes K^\star \big(T_n(\rho^n_{\textbf{u}})\big)\|_1+o(1)\nonumber \\
&\qquad = \|\Phi^{s_2}_{\textbf{u}^\prime} \otimes N_{{\bf u}^\prime} -P^\star \otimes K^\star \big( \Phi^{s_1}_{\textbf{u}} \otimes N_{\bf u}\big)\|_1 +o(1) \nonumber\\
&\qquad = \|\Phi^{s_2}_{\textbf{u}^\prime}-P^\star\big( \Phi^{s_1}_{\textbf{u}})\|_1 +o(1)\nonumber \\
&\qquad = R_{\rm minmax}(s_1,s_2,k) + o(1)\label{eq.upperbound}
\end{align}
where $R_{\rm minmax}(s_1,s_2,k) $ is the minimax risk for the quantum Gaussian problem, obtained in Theorem \ref{theorem21}.
By taking supremum over $\|{\bf u}\|<n^\epsilon$ we get
$$
R_{\rm max}(Q^\star_n, \textbf{r}_0,\lambda) \leq R_{\rm minmax}(s_1,s_2,k).
$$
which implies that
$$
R_{\rm minmax}(\textbf{r}_0,\lambda) \leq R_{\rm minmax}(s_1,s_2,k).
$$

Next, we show by contradiction that this inequality cannot be strict. Suppose that there exists a sequence of purification or dilution procedures $\tilde{Q}_n$, which act on qubits and satisfies $R_{max}(\tilde{Q}_n\textbf{r}_0,\lambda)\leq R_{\rm minmax}(s_1,s_2,k)- \eta$ for some $\eta>0$ and $n>n_0$. We will use LAN to show that there exists a Gaussian dilution (amplification) channel whose risk is strictly smaller than the minimax risk, which is a contradiction.

The general setup can be seen in (\ref{ocd})
\begin{equation}\begin{CD}
  \rho^{\otimes n}_{\textbf{u}} @>\tilde{Q}_n>> \rho^{\otimes m}_{\lambda\textbf{r}} \\
  @AA{S_n}A @V{T_m}VV       \\
    \Phi^{s_1}_{\textbf{u}} @>\tilde{P}>> \Phi^{s_2}_{k\alpha} \\
   \end{CD}\label{ocd}
   \end{equation}
Here LAN is restricted to a {\it two dimensional} family of rotated qubit states, for which the limit model is quantum Gaussian, with no classical component. Assuming $\Phi^{s_1}_{\textbf{u}}$ is in the domain of applicability of LAN  (which can be effected by an adaptive measurement \cite{GBA}), we get the inequalities
\begin{align}
&\|T_m\circ\tilde{Q}_n\circ S_n(\Phi^{s_1}_{\textbf{u}})-\Phi^{s_2}_{\textbf{u}'}\|\nonumber \\
&\qquad\leq\|\tilde{Q}_n(\rho^n_{\textbf{u}})-\rho^m_{\textbf{u}'}\|_1+\|T_m(\rho^m_{\textbf{u}'})-\Phi^{s_2}_{\textbf{u}'}\|_1\nonumber \\
&\qquad\leq R_{\rm minmax}(s_1,s_2,k) -\eta + o(1).
\end{align}
By taking the limit $n\to \infty$ we get the desired contradiction.

\textbf{Case} $\mathbf{3}$: $k^{(c)}_0\leq k\leq k_0$. This case applies only to dilution and the risk receives only a classical contribution. The proof follows the same steps as the previous case, with the quantum Gaussian replaced by the classical one. The inequality \eqref{eq.upperbound} becomes
\begin{align}
&R(Q^\star_n, \textbf{r} ,\lambda) \nonumber\\
& \qquad =\|\rho^m_{\textbf{u}'}-Q^\star_n(\rho^n_{\textbf{u}})\|_1\nonumber \\
&\qquad\leq \|N(0,1-\lambda^2\|\textbf{r}_0\|^2)-K^\star\big( N(0, 1-\|\textbf{r}_0\|^2)\big)\|_1+o(1)\nonumber \\
&\qquad =  R_{\rm minmax}(V_{1},V_{2},k) + o(1).
\end{align}
where $R_{\rm minmax}(V_{1},V_{2},k)$ is the optimal risk of Eq.~(\ref{clrisk}) and we have identified $V_1=1-\|\textbf{r}_0\|^2$, $V_2=1-\lambda^2\|\textbf{r}_0\|^2$. This implies
$$
R_{\rm minmax}(\textbf{r}_0,\lambda) \leq R_{\rm minmax}(V_{1},V_{2},k).
$$
The equality is obtained by showing that strict inequality would lead to a classical amplification
procedure whose risk is smaller than the minimax risk.

\textbf{Case} $\mathbf{4}$: $k>{\rm max}(k^{(c)}_0, k_0)$. This case applies to both dilution and amplification, and both the quantum and classical channels contribute to the risk.
\begin{align}
&R_{\max}(Q_n^\star,\textbf{r},\lambda)\nonumber\\
&\qquad
= \|\rho^m_{\textbf{u}'}-Q_n^\star(\rho^n_{\textbf{u}})\|_1\nonumber \\
&\qquad \leq\|\Phi^{s_2}_{\textbf{u}'}\otimes N_{{\bf u}^\prime}-
K^\star\otimes P^\star\big( \Phi^{s_1}_{\textbf{u}}\otimes N_{{\bf u}}   \big)\|_1+o(1)\nonumber \\
&\leq R_{\rm minmax}(V_{1},V_{2},s_1,s_2,k)+o(1).\label{a9}
\end{align}
Here $R_{\rm minmax}(V_{1},V_{2},s_1,s_2,k)$ is the minimax norm-one risk for the problem of transforming the Gaussian state $\Phi^{s_1}_{\textbf{u}}\otimes N_{{\bf u}}$ into
$\Phi^{s_2}_{\textbf{u}'}\otimes N_{{\bf u}^\prime}$. Since the quantum and classical components are independent and have different local parameters, the optimal channel is the product $K^\star\otimes P^\star$. The explicit expression of the minimax risk is
\begin{align}
&R_{\rm minmax}(V_{1},V_{2},s_1,s_2,k)\nonumber \\
&\quad = \int{dx}\sum_n\bigg|\frac{\exp\bigg(-\frac{x^2}{2k^2(1-\|\textbf{r}_0\|^2)}\bigg)}{\sqrt{2\pi k^2 \big(1-\|\textbf{r}_0\|^2\big)}}(1-\tilde{s})\tilde{s}^n\nonumber \\
&\qquad-\frac{\exp\bigg(-\frac{x^2}{2(1-\lambda^2\|\textbf{r}_0\|^2)}\bigg)}{\sqrt{2\pi \big(1-\lambda^2\|\textbf{r}_0\|^2\big)}}(1-s_2)s^n_2\bigg|
\label{qacopt}
\end{align}
Finally, the equality in (\ref{a9}) can be proven by contradiction as in case 2.
\end{proof}

%\bibliographystyle{apsrev}
%\bibliography{ref,bibsud,tesibib}

\begin{thebibliography}{99}
\expandafter\ifx\csname natexlab\endcsname\relax\def\natexlab#1{#1}\fi
\expandafter\ifx\csname bibnamefont\endcsname\relax
  \def\bibnamefont#1{#1}\fi
\expandafter\ifx\csname bibfnamefont\endcsname\relax
  \def\bibfnamefont#1{#1}\fi
\expandafter\ifx\csname citenamefont\endcsname\relax
  \def\citenamefont#1{#1}\fi
\expandafter\ifx\csname url\endcsname\relax
  \def\url#1{\texttt{#1}}\fi
\expandafter\ifx\csname urlprefix\endcsname\relax\def\urlprefix{URL }\fi
\providecommand{\bibinfo}[2]{#2}
\providecommand{\eprint}[2][]{\url{#2}}




  \bibitem{BBP}
\bibinfo{author}{\bibfnamefont{C.H.}~\bibnamefont{Bennett}},
  \bibinfo{author}{\bibfnamefont{G.}~\bibnamefont{Brassard}},
  \bibinfo{author}{\bibfnamefont{S.}~\bibnamefont{Popescu}},
  \bibinfo{author}{\bibfnamefont{B.}~\bibnamefont{Schumacher}},
  \bibinfo{author}{\bibfnamefont{J.A.}~\bibnamefont{Smolin}},
   \bibnamefont{and} \bibinfo{author}{\bibfnamefont{W.K.}~\bibnamefont{Wootters}},
  \bibinfo{journal}{Phys. Rev. Lett.} \textbf{\bibinfo{volume}{76}},
  \bibinfo{pages}{722-725} (\bibinfo{year}{1996}).

  \bibitem{S}
\bibinfo{author}{\bibfnamefont{A.M.}~\bibnamefont{Steane}},
  \bibinfo{journal}{Phys. Rev. Lett.} \textbf{\bibinfo{volume}{77}},
  \bibinfo{pages}{793} (\bibinfo{year}{1996}).

  \bibitem{CEM}
\bibinfo{author}{\bibfnamefont{J.I.}~\bibnamefont{Cirac}},
  \bibinfo{author}{\bibfnamefont{A.K.}~\bibnamefont{Ekert}},
   \bibnamefont{and} \bibinfo{author}{\bibfnamefont{C.}~\bibnamefont{Macchiavello}},
  \bibinfo{journal}{Phys. Rev. Lett.} \textbf{\bibinfo{volume}{82}},
  \bibinfo{pages}{4344} (\bibinfo{year}{1999}).

    \bibitem{KW}
M. Keyl and R. F. Werner, Annales Henri Poincare {\bf 2}, 1 (2001).

  \bibitem{DMP}
\bibinfo{author}{\bibfnamefont{G.M.}~\bibnamefont{D'Ariano}},
  \bibinfo{author}{\bibfnamefont{C.}~\bibnamefont{Macchiavello}},
   \bibnamefont{and} \bibinfo{author}{\bibfnamefont{P.}~\bibnamefont{Perinotti}},
  \bibinfo{journal}{Phys. Rev. Lett.} \textbf{\bibinfo{volume}{95}},
  \bibinfo{pages}{060503} (\bibinfo{year}{2005}).


\bibitem{RDC}
\bibinfo{author}{\bibfnamefont{M.}~\bibnamefont{Ricci}},
\bibinfo{author}{\bibfnamefont{F.}~\bibnamefont{De Martini}},
\bibinfo{author}{\bibfnamefont{N.J.}~\bibnamefont{Cerf}},
\bibinfo{author}{\bibfnamefont{R.}~\bibnamefont{Filip}},
\bibinfo{author}{\bibfnamefont{J.}~\bibnamefont{Fiurasek}}
 \bibnamefont{and}
\bibinfo{author}{\bibfnamefont{C.}~\bibnamefont{Macchiavello}}
 \bibinfo{journal}{Phys. Rev. Lett.} \textbf{\bibinfo{volume}{93}},
  \bibinfo{pages}{170501} (\bibinfo{year}{2004}).

\bibitem{OQC}
N. Gisin and S. Massar,
Phys. Rev. Lett. {\bf 79}, 2153 (1997) .

\bibitem{LC}
\bibinfo{author}{\bibfnamefont{L.}~\bibnamefont{Le~Cam}},
  \emph{\bibinfo{title}{Asymptotic Methods in Statistical Decision Theory}}
  (\bibinfo{publisher}{Springer Verlag, New York}, \bibinfo{year}{1986}).




\bibitem{GK}
\bibinfo{author}{\bibfnamefont{M.}~\bibnamefont{Gu\c{t}\u{a}}}
  \bibnamefont{and} \bibinfo{author}{\bibfnamefont{J.}~\bibnamefont{Kahn}},
  \bibinfo{journal}{Phys. Rev. A} \textbf{\bibinfo{volume}{73}},
  \bibinfo{pages}{052108} (\bibinfo{year}{2006}).

\bibitem{GJK}
\bibinfo{author}{\bibfnamefont{M.}~\bibnamefont{Gu\c{t}\u{a}}},
  \bibinfo{author}{\bibfnamefont{B.}~\bibnamefont{Janssens}}, \bibnamefont{and}
  \bibinfo{author}{\bibfnamefont{J.}~\bibnamefont{Kahn}},
  \bibinfo{journal}{Commun. Math. Phys.} \textbf{\bibinfo{volume}{277}},
  \bibinfo{pages}{127} (\bibinfo{year}{2008}).

\bibitem{GK2}
\bibinfo{author}{\bibfnamefont{J.}~\bibnamefont{Kahn}} \bibnamefont{and}
  \bibinfo{author}{\bibfnamefont{M.}~\bibnamefont{Gu\c{t}\u{a}}},
  \bibinfo{journal}{Commun. Math. Phys.} \textbf{\bibinfo{volume}{289}},
  \bibinfo{pages}{597} (\bibinfo{year}{2009}).

\bibitem{GJ}
\bibinfo{author}{\bibfnamefont{M.}~\bibnamefont{Gu\c{t}\u{a}}}
  \bibnamefont{and}
  \bibinfo{author}{\bibfnamefont{A.}~\bibnamefont{Jenc{c}ov\'{a}}},
  \bibinfo{journal}{Commun. Math. Phys.} \textbf{\bibinfo{volume}{276}},
  \bibinfo{pages}{341} (\bibinfo{year}{2007}).

  \bibitem{GKJ}
\bibinfo{author}{\bibfnamefont{M.}~\bibnamefont{Gu\c{t}\u{a}}},
  \bibinfo{author}{\bibfnamefont{B.}~\bibnamefont{Janssens}}  \bibnamefont{and}
  \bibinfo{author}{\bibfnamefont{J.}~\bibnamefont{Kahn}}
  \bibinfo{journal}{Commun. Math. Phys.} \textbf{\bibinfo{volume}{276}},
  \bibinfo{pages}{341} (\bibinfo{year}{2007}).

\bibitem{GBA}
M. Gu\c{t}\u{a}, P. Bowles, and G. Adesso, Phys. Rev. A {\bf 82}, 042310 (2010).

\bibitem{GKot}
M. Gu\c{t}\u{a} and W. Kotlowski, New J. Phys. {\bf 12}, 12303 (2010).



  \bibitem{AFF}
\bibinfo{author}{\bibfnamefont{U.L.}~\bibnamefont{Andersen}},
\bibinfo{author}{\bibfnamefont{R.}~\bibnamefont{Filip}},
\bibinfo{author}{\bibfnamefont{J.}~\bibnamefont{Fiurasek}},
\bibinfo{author}{\bibfnamefont{V.}~\bibnamefont{Josse}}
 \bibnamefont{and}
\bibinfo{author}{\bibfnamefont{G.}~\bibnamefont{Leuchs}}
  \bibinfo{journal}{Phys. Rev. A} \textbf{\bibinfo{volume}{72}},
  \bibinfo{pages}{060301} (\bibinfo{year}{2005}).


  \bibitem{HM}
\bibinfo{author}{\bibfnamefont{H.A.}~\bibnamefont{Haus}}
 \bibnamefont{and}
  \bibinfo{author}{\bibfnamefont{J.A.}~\bibnamefont{Mullen}},
 \bibinfo{journal}{Phys. Rev.} \textbf{\bibinfo{volume}{128}},
  \bibinfo{pages}{2407} (\bibinfo{year}{1962}).

\bibitem{CC}
\bibinfo{author}{\bibfnamefont{C.M.}~\bibnamefont{Caves}}
  \bibinfo{journal}{Phys. Rev. D} \textbf{\bibinfo{volume}{26}},
  \bibinfo{pages}{1817} (\bibinfo{year}{1982}).


\bibitem{CDG}
\bibinfo{author}{\bibfnamefont{A.A.}~\bibnamefont{Clerk}},
  \bibinfo{author}{\bibfnamefont{M.H.}~\bibnamefont{Devoret}},
    \bibinfo{author}{\bibfnamefont{S.M.}~\bibnamefont{Girvin}},
      \bibinfo{author}{\bibfnamefont{F.}~\bibnamefont{Marquardt}},
   \bibnamefont{and}
   \bibinfo{author}{\bibfnamefont{R.J.}~\bibnamefont{Schoelkopf}},
  \bibinfo{journal}{Rev. Mod. Phys.} \textbf{\bibinfo{volume}{82}},
  \bibinfo{pages}{1155} (\bibinfo{year}{2008}).

\bibitem{OUP}
Z. Y. Ou, S. F. Pereira, and H. J. Kimble, Phys. Rev. Lett. {\bf 70}, 3239 (1993).



\bibitem[{\citenamefont{Gu\c{t}\u{a} and Matsumoto}(2006)}]{GM}
\bibinfo{author}{\bibfnamefont{M.}~\bibnamefont{Gu\c{t}\u{a}}}
  \bibnamefont{and}
  \bibinfo{author}{\bibfnamefont{K.}~\bibnamefont{Matsumoto}},
  \bibinfo{journal}{Phys. Rev. A} \textbf{\bibinfo{volume}{74}},
  \bibinfo{pages}{032305} (\bibinfo{year}{2006}).

\bibitem{Torgersen}
\bibinfo{author}{\bibfnamefont{E.K.}~\bibnamefont{Torgersen}},
  \emph{\bibinfo{title}{Comparison of statistical experiments}}
  (\bibinfo{publisher}{Cambridge University Press}, \bibinfo{year}{1991}).


\bibitem{Bae}
J. Bae and  A. Acin, Phys. Rev. Lett. {\bf 97}, 030402 (2006).

\bibitem{vanderVaart} A. W. Van Der Waart, {\it Asymptotic Statistics} (Cambridge University Press, Cambridge, UK, 1998)


\bibitem[{\citenamefont{Gu\c{t}\u{a} and Kahn}()}]{GK3}
\bibinfo{author}{\bibfnamefont{M.}~\bibnamefont{Gu\c{t}\u{a}}}
  \bibnamefont{and} \bibinfo{author}{\bibfnamefont{J.}~\bibnamefont{Kahn}},
  \bibinfo{note}{in preparation}.


\bibitem{G} M. Gu\c{t}\u{a}, e--print arXiv:1007.0434 (2010).


\bibitem{C}
\bibinfo{author}{\bibfnamefont{N.J.}~\bibnamefont{Cerf}},
  \bibinfo{author}{\bibfnamefont{O.}~\bibnamefont{Kruger}},
  \bibinfo{author}{\bibfnamefont{P.}~\bibnamefont{Navez}},
   \bibinfo{author}{\bibfnamefont{R.F.}~\bibnamefont{Werner}},
  \bibnamefont{and} \bibinfo{author}{\bibnamefont{M. M.}~\bibnamefont{Wolf}},
  \bibinfo{journal}{Phys. Rev. Lett.} \textbf{\bibinfo{volume}{95}},
  \bibinfo{pages}{070501} (\bibinfo{year}{2005}).

\bibitem{O}
\bibinfo{author}{\bibfnamefont{M.}~\bibnamefont{Owari}},
  \bibinfo{author}{\bibfnamefont{M.~B.} \bibnamefont{Plenio}},
  \bibinfo{author}{\bibfnamefont{E.~S.} \bibnamefont{Polzik}},
  \bibnamefont{and} \bibinfo{author}{\bibfnamefont{A.}~\bibnamefont{Serafini}},
  \bibinfo{journal}{New J. Phys.} \textbf{\bibinfo{volume}{10}},
  \bibinfo{pages}{113014} (\bibinfo{year}{2008}).



  \end{thebibliography}

\end{document}